\newtheorem{theorem}{Theorem}
\newtheorem{proposition}[theorem]{Proposition}
\theoremstyle{plain}
\newcommand{\bA}{\bm{A}}
\newcommand{\bC}{\bm{C}}
\newcommand{\bD}{\bm{D}}
\newcommand{\bG}{\bm{G}}
\newcommand{\I}{\bm{I}}
\newcommand{\bI}{\bm{I}}
\newcommand{\bH}{\bm{H}}
\newcommand{\bL}{\bm{L}}
\newcommand{\bR}{\bm{R}}
\newcommand{\bS}{\bm{S}}
\newcommand{\bT}{\bm{T}}
\newcommand{\bU}{\bm{U}}
\newcommand{\bu}{\bm{u}}
\newcommand{\X}{\bm{X}}
\newcommand{\bX}{\bm{X}}
\newcommand{\y}{\bm{y}}
\newcommand{\by}{\bm{y}}
\newcommand{\bz}{\bm{z}}
\newcommand{\bPsi}{\bm{\Psi}}
\newcommand{\bbeta}{\bm{\beta}}
\newcommand{\bepsilon}{\bm{\varepsilon}}
\newcommand{\bgamma}{\bm{\gamma}}
\newcommand{\bSigma}{\bm{\Sigma}}
\newcommand{\btheta}{\bm{\theta}}
\newcommand{\bOmega}{\bm{\Omega}}
\newcommand{\bzero}{\bm{0}}
\newcommand{\Real}{\mathrm{Re}}
\newcommand{\Imag}{\mathrm{Im}}
\newcommand{\hbb}{\hat{\bbeta}}
\DeclareMathOperator*{\Var}{Var}
\DeclareMathOperator*{\define}{=}
\newcommand{\MVN}{\mathrm{MVN}}
\newcommand{\BF}{\mathrm{BF} }
\newcommand{\UN}{\mathrm{Unif}}
\newcommand{\PPA}{\mathrm{PPA}}
\newcommand{\PIP}{\mathrm{PIP}}
\newcommand{\N}{M} 
\renewcommand{\P}{P}
\begin{document}

\begin{frontmatter}
\title{Fast model-fitting of Bayesian variable selection regression using the iterative complex factorization algorithm}

\runtitle{The iterative complex factorization algorithm for BVSR} 

\begin{aug}
\author{\fnms{Quan} \snm{Zhou} \thanksref{addr1,note1}
\ead[label=e1]{qz9@rice.edu}
}
\and
\author{\fnms{Yongtao} \snm{Guan} \thanksref{addr2,fund1} 
\ead[label=e2]{yongtaog@bcm.edu}
}

\runauthor{Q. Zhou and Y. Guan}

\address[addr1]{
Department of Statistics, Rice University, 6100 Main St, Houston TX, 77005. 
Email: \printead{e1}
}

\address[addr2]{
USDA/ARS Children's Nutrition Research Center, Department of Pediatrics, and Department of Molecular and Human Genetics of Baylor College of Medicine, 1100 Bates, Room 2070, Houston TX, 77030. 
Email: \printead{e2}
}

\thankstext{note1}{Quan Zhou did most of his work when he was a PhD student in the program of Quantitative and Computational Biosciences, Baylor College of Medicine.}

\thankstext{fund1}{
This work was supported by United States Department of Agriculture/Agriculture Research Service under contract number 6250-51000-057 and National Institutes of Health under award number R01HG008157. 
}
\end{aug}

\begin{abstract}
Bayesian variable selection regression (BVSR) is able to jointly analyze genome-wide genetic datasets, but the slow computation via Markov chain Monte Carlo (MCMC) hampered its wide-spread usage. 
Here we present a novel iterative method to solve a special class of linear systems, which can increase the speed of the BVSR model-fitting tenfold. 
The iterative method hinges on the complex factorization of the sum of two matrices and the solution path resides in the complex domain (instead of the real domain). 
Compared to the Gauss-Seidel method, the complex factorization converges almost instantaneously and its error is several magnitude smaller than that of the Gauss-Seidel method. 
More importantly, the error is always within the pre-specified precision while the Gauss-Seidel method is not. 
For large problems with thousands of covariates, the complex factorization is 10 -- 100 times faster than either the Gauss-Seidel method or the direct method via the Cholesky decomposition. 
In BVSR, one needs to repetitively solve large penalized regression systems whose design matrices only change slightly between adjacent MCMC steps. 
This slight change in design matrix enables the adaptation of the iterative complex factorization method. 
The computational innovation will facilitate the wide-spread use of BVSR in reanalyzing genome-wide association datasets.  
\end{abstract}

\begin{keyword}
\kwd{Cholesky decomposition} 
\kwd{exchange algorithm}
\kwd{fastBVSR}
\kwd{Gauss-Seidel method}
\kwd{heritability}
\end{keyword}

\end{frontmatter}

 
\section{Introduction}
Bayesian variable selection regression (BVSR) can jointly analyze genome-wide genetic data to produce the posterior probability of association for each covariate and estimate hyperparameters such as heritability and the number of covariates that have nonzero effects~\citep{guan2011bayesian}.  
But the slow computation due to model averaging using Markov chain Monte Carlo (MCMC) hampered its otherwise warranted wide-spread usage.  
Here we present a novel iterative method to solve a special class of linear systems, which can increase the speed of the BVSR model-fitting tenfold. 
 
\subsection{Model and priors}  
We first briefly introduce the BVSR method. 
Our model, prior specification and notation follow closely those of~\citet{guan2011bayesian}.    
Consider the linear regression model 
\begin{equation}
\by = \mu \bm{1} + \bX \bbeta + \bepsilon, \quad \bepsilon \sim \MVN (\bzero, \tau^{-1} \bI ), 
\end{equation}
where $\bX$ is an $n \times \N$ column-centered matrix with $\N \gg n$, $\bI$ denotes an identity matrix of proper dimension, 
$\by$ and $\bepsilon$ are $n$-vectors, $\bbeta$ is an $\N$-vector, and MVN stands for multivariate normal distribution.
Let $\gamma_j$ be an indicator of the $j$-th covariate having a nonzero effect and write $\bgamma = \{\gamma_1, \dots, \gamma_j, \dots,  \gamma_\N \}$. A spike-and-slab prior for $\beta_j$ (the $j$-th component of $\bbeta$) is specified below,
\begin{equation}\label{eq:bvsr}
\begin{aligned}
\gamma_j \mid \pi  & \sim \text{Bernoulli}(\pi), \\
\beta_j  \mid \gamma_j = 0 & \sim \delta_0 , \\
\beta_j \mid \gamma_j = 1,  \sigma_\beta,  \tau  & \sim N (0, \sigma^2_\beta / \tau ), \\
\end{aligned}
\end{equation}
where $\pi$ is the proportion of covariates that have non-zero effects 
and $\sigma_\beta^2$ is the variance of prior effect size (scaled by $\tau^{-1}$). We will specify priors for both later. 
We use noninformative priors on the parameters $\mu$ and $\tau$,
\begin{equation}\label{eq:prior1}
\begin{aligned}
\mu \mid \tau  & \sim  N(0, \sigma_\mu^2 / \tau) ,     & \sigma_\mu  \rightarrow \infty , \\
\tau & \sim   \text{Gamma}(\kappa_1 / 2, \kappa_2 / 2)  ,  \;  & \kappa_1, \kappa_2  \rightarrow 0 ,  \\
\end{aligned}
\end{equation}
where $\text{Gamma}$ is in the shape-rate parameterization. 
As pointed out in~\citet{guan2011bayesian}, prior~\eqref{eq:prior1} is equivalent to $\P(\mu, \tau) \propto \tau^{-1/2}$, which is known as Jeffreys' prior~\citep{ibrahim1991bayesian,o2004kendall}. 
In practice, this means to use a diffuse prior for both $\mu$ and $\tau$.
Some may favor a simpler form $\P(\mu, \tau) \propto 1/\tau$, which makes no practical difference~\citep{berger2001objective, liang2008mixtures}.

Given $\bgamma$ and $\sigma_\beta^2$, after integrating out $\bbeta$, $\tau$, $\mu$ and letting $\sigma_\mu  \rightarrow \infty$, $\kappa_1, \kappa_2  \rightarrow 0 $, the Bayes factor with reference to the null model can be computed in closed form,    
\begin{equation}\label{eq:bf}
\BF (\bgamma, \sigma_\beta^2) =   | \bI + \sigma^2_\beta \bX_{\bgamma}^t  \bX_{\bgamma} |^{-1/2} \left(  1 - \dfrac{ \by^t \bX_{\bgamma} \hat{\bbeta} }
{ \by^t \by -  n \bar{y}^2 } \right)^{-n/2},
\end{equation}
where $\bX_{\bgamma}$ denotes the submatrix of $\bX$ with columns for which $\gamma_j = 1$,  $| \cdot |$ denotes matrix determinant, and $\hat{\bbeta}$ is the posterior mean for $\bbeta$ given by
\begin{equation}\label{eq:hat.beta}
\hat{\bbeta} \define  ( \bX_{\bgamma}^t  \bX_{\bgamma}  + \sigma^{-2}_\beta \bI )^{-1}  \bX_{\bgamma}^t  \by .
\end{equation}
The null-based Bayes factor $\BF (\bgamma, \sigma_\beta^2)$ is proportional to the marginal likelihood $P (\by  \mid \bgamma, \sigma_\beta^2)$, and evaluating $\BF$ is easier than evaluating the marginal likelihood due to cancellation of constants. 
The limiting prior~\eqref{eq:prior1} not only makes the Bayes factor expression simpler (compared to that with finite value of $\sigma_\mu$ and positive values of $\kappa_1$ and $\kappa_2$), but also makes it invariant with respect to the shifting and scaling of $\by$.

We now discuss the prior specification for the two hyperparameters, $\pi$ and $\sigma_\beta^2$.
To specify the prior for $\sigma_\beta^2$,  \citet{guan2011bayesian} introduced a hyperparameter $h$ (which stands for heritability) such that
\begin{equation}\label{eq:def.h}
h   =  \sigma^2_\beta \sum\limits_{j=1}^\N  \gamma_j s_j  /  ( 1 +  \sigma^2_\beta \sum\limits_{j=1}^\N  \gamma_j s_j   ) , 
\end{equation}
where $s_j$ denotes the variance of the $j$-th covariate.  
Conditional on $\bgamma$, specifying a prior on $h$ will induce a prior on $\sigma_\beta^2$, so henceforth we may write $\sigma_\beta^2(\bgamma, h)$ to emphasize $\sigma_\beta^2$ is a function of $h$ and $\bgamma$. 
Since $h$ is motivated by the narrow-sense heritability, its prior is easy to specify and we use   
 \begin{equation}\label{eq:h}
h \sim \UN(0, 1),
\end{equation}
by default to reflect our lack of knowledge of heritability, although one can impose a strong prior by specifying a uniform distribution on a narrow support.  
A bonus of specifying the prior on $\sigma_\beta^2$ through $h$ is that when $h  \sim \UN(0, 1)$, the induced prior on $\sigma_\beta^2$ is heavy-tailed~\citep{guan2011bayesian}. 

Up till now, we follow faithfully the model and the prior specification of \citet{guan2011bayesian}.  
The prior on $\bgamma$ can be induced by the prior on $\pi$. 
\citet{guan2011bayesian} specified the prior on $\pi$ as uniform on its log scale,  
$\log \pi  \sim \UN(\log\pi_{min}, \log\pi_{max})$,  
which is equivalent to $\P(\pi) \propto 1/\pi$ for $\pi \in (\pi_{min}, \pi_{max})$, and sampled $\pi, h, \bgamma$. 
But here we do something slightly different by integrating out $\pi$ analytically. This is sensible because $\bgamma$ is very informative on $\pi$. Specifically,  
we integrate $\P(\bgamma, \pi)$ over  $\P(\pi)$ such that $\P (\bgamma) = \int \P (\bgamma \mid \pi) \P (\pi) d\pi$  to obtain the marginal prior on $\bgamma$,
\begin{equation}\label{eq:prior.gamma}
\begin{aligned}
\P (\bgamma) 
= \frac{1}{\log{(\pi_{max}/\pi_{min}})} \int_{\pi_{min}}^{\pi_{max}} \P(\bgamma \mid \pi) /\pi\; d \pi,  
\end{aligned}
\end{equation}
where the finite integral is related to the truncated Beta distribution and can be evaluated conveniently. 
If $\pi_{min}$ goes to $0$ and $\pi_{max}$ goes to $1$ we have an improper prior $\P(\pi) \propto 1/\pi$ and the marginal prior on $\bgamma$ becomes
\begin{equation}\label{eq:prior.gamma2}
\begin{aligned}
\P(\bgamma) \propto \;  \Gamma( |\bgamma| ) \, \Gamma (\N + 1 - |\bgamma|), 
\end{aligned}
\end{equation}
where we recall $\N$ is the total number of covariates, $|\bgamma|= \sum{\gamma_j}$ is the number of selected covariates in the model, and $\Gamma$ denotes the Gamma function. 
$\P(\bgamma)$ is always a proper probability distribution because  it is defined on a finite set.

\subsection{Posterior inference and computation}  
The joint posterior distribution of $(\bgamma, h)$ is given by 
\begin{equation}\label{eq:post}
\begin{aligned}
\P(\bgamma, h \mid \by) \propto P (\by \mid \bgamma, h) \P(\bgamma) \P(h). 
\end{aligned}
\end{equation}
The posterior inferences typically include computing the posterior inclusion probability $P(\gamma_j = 1 \mid \by)$, which measures the strength of marginal association of the $j$-th covariate, the posterior distribution of the model size $|\bgamma|$, and the posterior distribution of the heritability $h$, which measures the proportion of phenotypic variance explained by the selected models.
We use MCMC to sample this joint posterior of $(\bgamma, h)$. Our sampling scheme follows closely that of~\citet{guan2011bayesian}. 
In each MCMC iteration, to evaluate~\eqref{eq:post} for a proposed parameter pair $(\bgamma', h')$, we need to compute the marginal likelihood $P (\by \mid \bgamma', h')$, which is proportional to~\eqref{eq:bf}. Two time-consuming calculations are the matrix determinant $|\bI + \sigma^2_\beta \bX_{\bgamma}^t  \bX_{\bgamma} |$ and $\hat{\bbeta}$ defined in~\eqref{eq:hat.beta}, both of which have cubic complexity (in $|\bgamma|$). 
The computation of the determinant can be avoided by using a MCMC sampling trick which we will discuss later. 
The main focus of the paper is a novel algorithm to evaluate~\eqref{eq:hat.beta}, which reduces its complexity from cubic to  quadratic. 

The rest of the paper is structured as follows. 
Section 2 introduces the iterative complex factorization (ICF) algorithm. 
Section 3 describes how to incorporate the ICF algorithm into BVSR. Both sections contain numerical examples, including a real dataset from genome-wide association studies. 
A short discussion concludes the paper.

\section{The iterative complex factorization}
In this section, we propose a novel algorithm for solving the following linear system 
\begin{equation}\label{eq:beta2}
(\bX^t \bX + \bSigma^2) \hat{\bbeta}  = \bz , 
\end{equation}
where $\bSigma$ is a diagonal matrix with positive (but not necessarily identical) entries on the diagonal, and $\bX$ is an $n\times p$ matrix.  
Clearly \eqref{eq:hat.beta} is a special case of \eqref{eq:beta2}. 
In the context of BVSR, $\bX$ should be understood as $\bX_{\bgamma}$ and $p=|\bgamma|$. 
We assume $p < n$, where the sample size $n$ ranges from several hundreds to tens of thousands.
The computational advancement we will introduce, however, can be applied to scenarios where $p > n$ (see discussion in Section~\ref{sec:disc}).

Note that $\hat{\bbeta}$ is the familiar ridge regression estimator~\citep{draper1979ridge}. 
It may appear that an algorithm designed for solving ridge regression can be borrowed to solve~\eqref{eq:beta2}. 
But a unique feature of BVSR is that in each iteration of MCMC, the design matrix $\bX$ usually changes only by one or a few columns. 
Thus, $\bX^t \bX$ and its Cholesky decomposition can be obtained conveniently 
(details will follow).  
This unique feature allows us to design a much more efficient algorithm.

\subsection{Existing methods}
In \cite{guan2011bayesian}, the linear system~\eqref{eq:beta2} was solved using the Cholesky decomposition of $\bX^t \bX  +  \bSigma^2$,  which requires $ p^3/3$ flops~\citep[Lec.~23]{trefethen1997numerical}. 
Although computing $\bX^t \bX$ from $\bX$ requires $O(np^2)$ flops, 
when $\bX$ only changes by a few columns, the majority of the entries in $\bX^t \bX$ do not change and updating $\bX^t \bX$ only requires $O(np)$ flops.

Iterative methods sometimes can be used to reduce the computational time. 
Define
\begin{equation}\label{def:a}
\bA \define \bX^t \bX  +  \bSigma^2 = \bL + \bD + \bU
\end{equation}
where $\bL$ ($\bU$) is the strictly lower (upper) triangular component and $\bD$ contains only the diagonals. 
Then three popular iterative procedures can be summarized as follows: 
\begin{equation*}\label{def:3iter}
\begin{aligned}
\begin{array}{cc}
\text{Jacobi method:} &   \hat{\bbeta}^{(k+1)} = \bD^{-1} \left[ -(\bL + \bU) \hat{\bbeta}^{(k)} + \bz  \right ] ;  \\
\text{Gauss-Seidel method:} &   \hat{\bbeta}^{(k+1)} =( \bD + \bL)^{-1} \left( -\bU \hat{\bbeta}^{(k)} + \bz  \right ) ;  \\
\text{successive over-relaxation:} &   \hat{\bbeta}^{(k+1)} = (\bD + \omega \bL)^{-1} \left[ -(\omega \bU - (1-\omega)\bD ) \hat{\bbeta}^{(k)} + \omega \bz  \right ]   .
\end{array}
\end{aligned}
\end{equation*}
The successive over-relaxation (SOR) method is a generalization of the Gauss-Seidel method, where $\omega$ is called the relaxation parameter. 
When $\bA$ is positive definite, the Gauss-Seidel method always converges and the SOR method converges  for $\omega \in (0, 2)$~\citep[Chap.~10.1.2]{golub2012matrix}.
For all three iterative methods, each iteration requires $2p^2$ flops. Thus, whether an iterative method is more efficient than the Cholesky decomposition depends on how many iterations it takes to converge.  
Another notable class of iterative methods is called Krylov subspace methods.  Two famous examples are the steepest descent and the conjugate gradient~\citep[Lec.~38]{trefethen1997numerical}.  

In principle, all methods developed to solve ridge regression can be used here to solve~\eqref{eq:beta2}, as we alluded to earlier. 
For example, the methods of \cite{elden1977algorithms}, \cite{lawson1995solving} and~\cite{turlach2006even} were developed for solving~\eqref{eq:beta2} with fixed $\bX$ but changing $\bSigma$, 
while  \cite{hawkins2002faster} devised a method  for fixed $\bSigma$ but changing $\bX$.  
Modern least square solvers~\citep{rokhlin2008fast, avron2010blendenpik, meng2014lsrn} typically considered the case where $\bX$ is extremely large, sparse, or ill-conditioned, and under such conditions, these least square solvers outperform solving directly via Cholesky decomposition; 
see~\citet[Sec. 5]{meng2014lsrn} for how to apply least square solvers to ridge regression.
But all methods quoted above are less effective (or not applicable) in the context of BVSR, because they are not designed for solving~\eqref{eq:beta2} millions of times, each time with a slightly different $\bX$ and a different $\bSigma$, and they did not take advantage of the feature of BVSR that the Cholesky decomposition of $\bX^t\bX$ can be obtained efficiently.

\subsection{The iterative complex factorization (ICF) algorithm}
Let $\bR^t\bR$ be the Cholesky decomposition of $\bX^t\bX$, where $\bR$ is upper triangular.   
In the context of BVSR, given the Cholesky decomposition of $\bX^t \bX$, the Cholesky decomposition of a new matrix $(\bX')^t \bX'$ can be obtained efficiently since $\bX'$ (the proposed new design matrix) differs from $\bX$ only by one or a few columns (see Section~\ref{sec:bvsr.icf}).  
So we consider solving the linear system
\begin{equation}\label{eq:ridge3}
(\bR^t \bR + \bSigma^2) \hat{\bbeta} = \bz. 
\end{equation}
Contrary to our intuition, the Cholesky decomposition of $\bR^t \bR + \bSigma^2$ cannot be obtained efficiently.  This was also noticed in~\citet[p. 6]{zhou2013polygenic}.
We instead perform the following decomposition 
\begin{equation}\label{eq:icf.defs}
\begin{aligned}
\bR^t \bR + \bSigma^2 & = \bH - i \bS \\
\bH \,  &\define \, (\bR^t - i \bSigma )(\bR + i \bSigma ) \\
\bS  \, &\define \, \bR^t \bSigma - \bSigma \bR,
\end{aligned}
\end{equation}
where $i$ is the imaginary unit. Then we have the update 
$$\bH \hbb^{(k+1)} =  i\bS\hbb^{(k)} + \bz,$$ 
where the right-hand side is a complex vector, and  $\hbb^{(k+1)}$ can be obtained by a forward and a backward substitution involving two complex triangular matrices,  $\bR^t - i \bSigma$ and $\bR + i \bSigma$. 
This update, however, diverges from time to time. Examining the details of the observed divergent cases reveals that the culprit is the imaginary part of $\hbb^{(k)}$. 
Because the solution $\hbb$ is real, discarding the imaginary part of $\hbb^{(k)}$ at the end of each iteration will not affect the fixed point to which the iterative method converges. 
Denoting the real part of a complex entity (scalar or vector) by $\Real$,  the generalized update of our algorithm ICF (Iterative Complex Factorization) becomes 
\begin{equation}\label{eq:icf.iter2}
\begin{aligned}
\hat{\bbeta}^{(k+1)} &=  \Real [     (1-\omega)  \hat{\bbeta}^{(k)}   +   \omega \bH^{-1}   ( i \bS    \hat{\bbeta}^{(k)}   +  \bz  ) ], 
\end{aligned}
\end{equation}
where we have also introduced a relaxation parameter $\omega$.  Intuitively $\omega$ makes the update \emph{lazy} to avoid over-shooting. 
The revised update converges almost instantaneously, in a few iterations, compared to a few dozen to a few hundred iterations with the Gauss-Seidel method.  
Each iteration of~\eqref{eq:icf.iter2} requires $6p^2$ flops,  thrice that required by a Gauss-Seidel iteration, because ICF operates complex (instead of real) matrices and vectors.  
The right-hand side of~\eqref{eq:icf.iter2} can be reorganized as  $ \Real [    ((1-\omega) \bI  +  i \omega \bH^{-1}   \bS ) \hbb^{(k)} + \omega \bH^{-1} \bz ].$ Note that $\bH^{-1} \bz$ can be computed via forward and backward substitutions and does not require matrix inversion.   

\subsection{ICF converges to the right target}\label{sec:icf.conv}
\begin{proposition}\label{prop1}
Denote $\bPsi(\omega) =  \Real [    (1-\omega) \bI  +  i \omega \bH^{-1}   \bS   ].$  
Then ICF in~\eqref{eq:icf.iter2} converges for any starting point if  $\rho(\bPsi(\omega)) < 1$ where $\rho$ denotes the spectral radius.
\end{proposition}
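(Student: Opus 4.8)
The plan is to show that, in spite of the $\Real[\cdot]$ operator, the ICF update~\eqref{eq:icf.iter2} is an affine-linear stationary iteration with iteration matrix $\bPsi(\omega)$, and then to invoke the classical convergence criterion for such iterations; identifying the limit with the true solution then also delivers the claim of the section title.

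First I would record the elementary fact that for any complex matrix $\bM$ and any \emph{real} vector $\bv$ one has $\Real[\bM\bv] = (\Real\bM)\bv$. Since every iterate produced by~\eqref{eq:icf.iter2} is the real part of something, $\hat{\bbeta}^{(k)}$ is real for all $k \ge 1$ (a complex starting value only affects the first step, so we may take $\hat{\bbeta}^{(0)}$ real without loss of generality). Using the reorganized form of the update noted right after~\eqref{eq:icf.iter2}, namely $\hat{\bbeta}^{(k+1)} = \Real[\,((1-\omega)\bI + i\omega\bH^{-1}\bS)\hat{\bbeta}^{(k)} + \omega\bH^{-1}\bz\,]$, the above fact gives, for $k \ge 1$,
\begin{equation*}
\hat{\bbeta}^{(k+1)} = \bPsi(\omega)\,\hat{\bbeta}^{(k)} + \bc, \qquad \bc \define \Real[\omega\bH^{-1}\bz],
\end{equation*}
a fixed affine-linear recursion. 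Here $\bc$ is well-defined because $\bH = (\bR^t - i\bSigma)(\bR + i\bSigma)$ is invertible: $\bR + i\bSigma$ is upper triangular with diagonal entries $R_{jj} + i\Sigma_{jj}$, which are nonzero since $\Sigma_{jj} > 0$.

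Next I would apply the standard theory of stationary linear iterations. Unrolling gives $\hat{\bbeta}^{(k)} = \bPsi(\omega)^{k-1}\hat{\bbeta}^{(1)} + \sum_{j=0}^{k-2}\bPsi(\omega)^{j}\bc$. If $\rho(\bPsi(\omega)) < 1$, then $\bPsi(\omega)^{m} \to \bzero$ as $m \to \infty$ (via the Jordan canonical form, or Gelfand's formula), and $\bI - \bPsi(\omega)$ is nonsingular with $\sum_{j \ge 0}\bPsi(\omega)^{j} = (\bI - \bPsi(\omega))^{-1}$ (Neumann series). Hence $\hat{\bbeta}^{(k)} \to (\bI - \bPsi(\omega))^{-1}\bc$ for every starting point, which is the assertion of the proposition.

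Finally, to match the section title, I would check the limit equals the intended target $\hat{\bbeta}$ solving~\eqref{eq:ridge3}. From~\eqref{eq:icf.defs}, $\bH - i\bS = \bR^t\bR + \bSigma^2 = \bA$, so $i\bS\hat{\bbeta} + \bz - \bH\hat{\bbeta} = \bz - \bA\hat{\bbeta} = \bzero$; substituting the real vector $\hat{\bbeta}$ into the recursion and using $\Real[\bM\hat{\bbeta}] = (\Real\bM)\hat{\bbeta}$ yields $\bPsi(\omega)\hat{\bbeta} + \bc = \hat{\bbeta} + \omega\,\Real[\bH^{-1}(i\bS\hat{\bbeta} + \bz - \bH\hat{\bbeta})] = \hat{\bbeta}$, so $\hat{\bbeta}$ is the (unique, since $\rho < 1$) fixed point and therefore the limit. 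The only delicate point in the whole argument is the first one --- making precise that $\Real[\cdot]$ acts linearly because the iterates are real --- after which everything reduces to textbook facts about $\rho(\cdot) < 1$; so I expect no serious obstacle, only bookkeeping.
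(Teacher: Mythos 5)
Your proof is correct and takes essentially the same route as the paper, which simply cites the classical convergence theorem for stationary iterations (Theorem~10.1.1 of Golub and Van Loan); you have unpacked that citation into a self-contained argument. The one step you make explicit that the paper glosses over---that $\Real[\cdot]$ acts linearly on the (real) iterates, so the update really is a stationary affine iteration with iteration matrix $\bPsi(\omega)$---is exactly the observation needed to justify the citation, and your fixed-point verification correctly identifies the limit with the solution of~\eqref{eq:ridge3}.
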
 
\begin{proof}
The true solution $\hat{\bbeta}$ satisfies $  \bH \hbb =  i\bS\hbb  + \bz$, which, after some algebra, gives
$\hat{\bbeta}^{(k + 1)} - \hat{\bbeta} = \bPsi(\omega) (\hat{\bbeta}^{(k)} - \hat{\bbeta} )$.  
The statement then follows faithfully from Theorem~10.1.1 of~\citet{golub2012matrix}. 
\end{proof}

\begin{theorem}\label{th:icf}
There exists $\omega \in (0, 1]$ such that the ICF update detailed in~\eqref{eq:icf.iter2} converges to the true solution.  
\end{theorem}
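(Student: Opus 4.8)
The plan is to verify, for a suitable $\omega$, the spectral--radius hypothesis $\rho(\bPsi(\omega)) < 1$ of Proposition~\ref{prop1}, after first rewriting $\bPsi(\omega)$ in a transparent form. I would begin by recording the structure hidden in~\eqref{eq:icf.defs}. Since $\bR$ is upper triangular and $\bSigma$ is diagonal with strictly positive entries, $\bB := \bR + i\bSigma$ is upper triangular with diagonal entries $r_{jj} + i\sigma_j \neq 0$, hence nonsingular, so $\bH = \bB^*\bB$ is Hermitian \emph{positive definite}. Writing $\bH^{-1} = \bP + i\bQ$ with $\bP,\bQ$ real, Hermiticity forces $\bP^t = \bP$ and $\bQ^t = -\bQ$; evaluating the (real, strictly positive) quadratic form $\bx^t\bH^{-1}\bx$ at an arbitrary real $\bx \neq 0$ then shows that $\bP$ is real symmetric positive definite.

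Next I would simplify $\bPsi(\omega)$. With $\bA := \bR^t\bR + \bSigma^2$, the identity $\bR^t\bR + \bSigma^2 = \bH - i\bS$ from~\eqref{eq:icf.defs} reads $i\bS = \bH - \bA$, so $i\bH^{-1}\bS = \bH^{-1}(i\bS) = \bI - \bH^{-1}\bA$, and since $\bA$ is real, $\Real(\bH^{-1}\bA) = \bP\bA$. Hence
\[
\bPsi(\omega) = \Real\!\big[(1-\omega)\bI + i\omega\,\bH^{-1}\bS\big] = (1-\omega)\bI + \omega\big(\bI - \bP\bA\big) = \bI - \omega\,\bP\bA .
\]
Because $\bP$ and $\bA$ are both real symmetric positive definite, $\bP\bA$ is similar to $\bP^{1/2}\bA\bP^{1/2}$ and therefore has real, strictly positive eigenvalues $\nu_1,\dots,\nu_p$; the eigenvalues of $\bPsi(\omega)$ are then $1-\omega\nu_j$, so $\rho(\bPsi(\omega)) = \max_j |1-\omega\nu_j| < 1$ exactly when $0 < \omega < 2/\max_j\nu_j$. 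Since $\max_j\nu_j > 0$, the interval $\big(0,\min\{1,\,2/\max_j\nu_j\}\big)$ is a nonempty subset of $(0,1]$, and for any $\omega$ in it Proposition~\ref{prop1} gives convergence from every starting point; in particular every sufficiently small $\omega \in (0,1]$ works, which matches the intuition that making the update ``lazy'' cures the divergence.

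Finally I would confirm the limit is the intended one. On $\mathbb{R}^p$, update~\eqref{eq:icf.iter2} is the affine iteration $\hbb^{(k+1)} = \bPsi(\omega)\hbb^{(k)} + \omega\,\bP\bz$ (using $\Real(\bH^{-1}\bz) = \bP\bz$ because $\bz$ is real), and since $\bI - \bPsi(\omega) = \omega\bP\bA$ is nonsingular the iteration has a unique fixed point $\hbb$, which solves $\omega\bP\bA\hbb = \omega\bP\bz$, i.e. $\bA\hbb = \bz$ after cancelling the invertible $\bP$ — the true solution of~\eqref{eq:ridge3}, and hence of~\eqref{eq:beta2}.

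The only real subtlety lies in the two positivity assertions — that $\bH$ is Hermitian positive definite and that $\bP = \Real(\bH^{-1})$ inherits (real) positive definiteness — together with the standard but essential fact that the product of two symmetric positive definite matrices has positive real spectrum. Once these are in place, the reduction $\bPsi(\omega) = \bI-\omega\bP\bA$, the spectral--radius bound, the admissible range of $\omega$, and the identification of the fixed point are all routine; the theorem is then just the statement that a Richardson-type iteration with a positive-definite preconditioner converges for small enough step size.
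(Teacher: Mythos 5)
Your proof is correct, and it reaches the conclusion by a genuinely different (and arguably cleaner) key lemma than the paper's. The paper writes out $\Imag(\bH^{-1})$ explicitly, applies the Woodbury identity to get $\bPsi(\omega) = \bI - \omega\,(\bI + (\bA^{-1}\bS)^2)^{-1}$, and then analyzes the spectrum of $\bA^{-1}\bS$: since $\bS$ is skew-symmetric and $\bA$ is symmetric positive definite, $\bA^{-1}\bS$ has purely imaginary eigenvalues $\pm i\eta$, and the Hermitian positive definiteness of $\bH$ forces $0 \le \eta < 1$, giving eigenvalues $1 - \omega/(1-\eta^2)$ for $\bPsi(\omega)$. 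You instead collapse everything to $\bPsi(\omega) = \bI - \omega\,\bP\bA$ with $\bP = \Real(\bH^{-1})$ symmetric positive definite, and invoke the classical fact that a product of two SPD matrices has real positive spectrum; this avoids the Woodbury step and the skew-symmetry argument entirely, and your identities $\bP\bA = (\bI + (\bA^{-1}\bS)^2)^{-1}$ (implicitly) and $\nu = 1/(1-\eta^2) \ge 1$ show the two computations agree. Your version also supplies two things the paper leaves implicit: a proof that $\bH = (\bR+i\bSigma)^*(\bR+i\bSigma)$ is Hermitian positive definite (the paper simply asserts this), and an explicit identification of the fixed point as the solution of $\bA\hbb = \bz$. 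What the paper's longer route buys is the parametrization of the spectrum of $\bPsi(\omega)$ by the eigenvalues $\pm i\eta$ of $\bA^{-1}\bS$, which is exactly what Proposition 3 and the adaptive tuning of $\omega$ rely on; your existence argument is sufficient for the theorem as stated but does not by itself deliver that quantitative link.
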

\begin{proof}
Using the notation defined in~\eqref{def:a} and~\eqref{eq:icf.defs}, we have  
 $\bH = \bA + i \bS$.  
Since $[\Real(\bH^{-1}) + i \, \Imag(\bH^{-1}) ](\bA + i \bS) = \bI $, 
 we have 
\begin{align*}
 \left\{\begin{array}{c}
 \Real(\bH^{-1}) \bA -  \Imag(\bH^{-1})\bS = \bI, \\
  \Imag(\bH^{-1}) \bA + \Real(\bH^{-1}) \bS = \bm{0}. 
\end{array}  
 \right.  
\end{align*}  
Using the fact that $\bA$ is invertible, we can solve the above system and obtain 
\begin{equation*}
\Imag(\bH^{-1}) = - \bA^{-1} \bS (\bA + \bS \bA^{-1} \bS )^{-1}  .
\end{equation*}
Both $\bA$ and $\bS$ are real matrices, and by the Woodbury identity we have 
\begin{equation}\label{eq:psi2}
\bPsi(\omega) = \bI - \omega (\bI + (\bA^{-1} \bS)^2 )^{-1}.
\end{equation} 
Immediately, for a fixed $\omega$, the spectrum of $\bPsi(\omega)$ is fully determined by the spectrum of $\bA^{-1} \bS$. 
Because $\bS$ is skew-symmetric, we have $-\bS = \bS^t$.  Since $\bA$ is symmetric, so does $\bA^{-1/2}.$ 
Then we have $-\bA^{-1/2} \bS \bA^{-1/2} = \bA^{-1/2}\bS^t \bA^{-1/2}$, which means $\bA^{-1/2} \bS \bA^{-1/2}$ is also skew-symmetric.  
Hence the eigenvalues of  $\bA^{-1} \bS$, which are identical to those of $\bA^{-1/2} \bS \bA^{-1/2}$, are conjugate pairs of pure imaginary numbers or zero. 
Let $\pm  \eta i $ be such a pair with $\eta \geq 0 $ and $\bu$ be the eigenvector corresponding to the eigenvalue $\eta i$. 
We have $\bA^{-1} \bS \bu = i \eta \bu$, which implies $i \bu^* \bS \bu = - \eta \bu^* \bA \bu$,
where $\bu^*$ denotes the conjugate transpose of $\bu$. 
Since $\bH$ is a Hermitian positive definite matrix, 
\begin{equation*}
\bu^* \bH \bu =  \bu^* (\bA + i\bS ) \bu = (1 - \eta) \bu^* \bA \bu > 0  .
\end{equation*}
Since $\bA$ is also positive definite,  we have $\bu^* \bA \bu > 0$ and 
\begin{equation} \label{eq:eta}
0 \le \eta < 1. 
\end{equation}
By~\eqref{eq:psi2},  the eigenvalues of $\bPsi(\omega)$ are identical pairs equal to $1- \omega/(1-\eta^2)$ (this includes the case $\eta=0$).  
Proposition~\ref{prop1} just requires $\left | 1- \omega/ (1-\eta^2)  \right | < 1$, or equivalently,
\begin{equation}\label{eq:cond}
 0 < \omega < 2(1 - \eta^2),
\end{equation}
holds for all $\eta$. Thus the existence of $\omega$ follows from \eqref{eq:eta}.
\end{proof}


By Proposition~\ref{prop1}, the spectral radius of $\bPsi(\omega)$ determines how fast the error converges to zero. 
We provide a theory-guided procedure to adaptively tune the relaxation parameter $\omega$, which relies on 
the following proposition that connects  $\rho(\bPsi(\omega) )$ with $\omega$ and $\eta$. 
\begin{proposition}\label{prop:adapt}
Denote the eigenvalues of $\bA^{-1}\bS$ as $\pm \eta i \; (\eta \geq 0)$, where $\bA$ and $\bS$ are defined in ~\eqref{def:a} and~\eqref{eq:icf.defs}, and obtain $\eta_{min}$ and $\eta_{max}$. 
Then the spectral radius of $\bPsi(\omega)$ is 
\begin{equation}\label{eq:rad.psi}
 \rho (\bPsi(\omega) ) =  \max \left\{ 1 - \dfrac{\omega}{1 - \eta_{min}^2} \; , \quad 
 \dfrac{\omega}{1 - \eta_{max}^2} - 1  \right\}, 
\end{equation}
and the optimal value for $\omega$ to achieve the minimum of  $\rho (\bPsi(\omega) )$ is 
\begin{equation}\label{eq:optimal.w}
\omega^\star = 2 \left( \dfrac{1 }{1 - \eta_{min}^2 } + \dfrac{1}{ 1 - \eta_{max}^2 } \right)^{-1}. 
\end{equation}
\end{proposition}   
\begin{proof}
By~\eqref{eq:psi2} and~\eqref{eq:eta}, the smallest and the largest eigenvalue of $\bPsi(\omega)$ are $1 - \omega / (1 - \eta^2_{max})$ and $1 - \omega / (1 - \eta^2_{min}) $ with $\eta \in [0, 1)$. After adjusting for their signs, we obtain~\eqref{eq:rad.psi}. 
The right-hand side of~\eqref{eq:rad.psi} is a function of $\omega$, with the first item decreasing linearly in $\omega$ and the second item increasing linearly in $\omega$. 
Hence the minimum of ~\eqref{eq:rad.psi} is attained when the two quantities in the braces are equal, which proves~\eqref{eq:optimal.w}. 
\end{proof}
Our adaptive strategy for choosing $\omega$ assumes that $\eta_{min}$ is zero, which  holds trivially for odd $p$ by the property of skew-symmetric matrices. 
When $p$ is even, our numerical studies found $\eta_{min} = 0$ is still a valid assumption in practice (Supplementary  S3).
We start the ICF update with $\omega^{(0)} = 1$. 
Suppose at the $k$-th iteration we can produce an estimate $\hat{\rho}^{(k)}$ for the spectral radius of $\bPsi ( \omega^{(k)} )$. Then using~\eqref{eq:rad.psi}, $\eta_{max}^2$ can be estimated by
${\eta_{max}^2} \approx 1- \omega^{(k)}/ (1+\hat{\rho}^{(k)}).$
Plugging this into~\eqref{eq:optimal.w} we obtain an update for $\omega$ 
\begin{equation*}
\omega^{(k+1)} = \dfrac{2 \omega^{(k)}}{1 +  \omega^{(k)} + \hat{\rho}^{(k)}  }.
\end{equation*}
Note that the update does not involve $\eta_{max}^2$, but only $\omega^{(k)}$ and $\hat{\rho}^{(k)}$, 
and $\omega^{(k+1)}$ is a decreasing function of $\rho^{(k)}$. 
Finally, to estimate the spectral radius of $\rho^{(k)}$ we use
\begin{equation*}
\hat{\rho}^{(k)} = \dfrac{ || \hat{\bbeta}^{(k)} - \hat{\bbeta}^{(k-1)} ||_2 }{|| \hat{\bbeta}^{(k-1)} - \hat{\bbeta}^{(k-2)} ||_2   }, 
\end{equation*}
where $|| \cdot ||_2$ denotes the $\ell^2$-norm. 
This update strategy borrows the idea of power iteration and is motivated by the observation that $\hat{\bbeta}^{(k)} - \hat{\bbeta}^{(k-1)} = \bPsi(\omega) (\hat{\bbeta}^{(k-1)} - \hat{\bbeta}^{(k-2)} )$ if $\omega$ were fixed (see also Proposition~\ref{prop1}). 
The procedure works well in our numerical studies. 
To take care of the boundary conditions, we use $\omega^{(k)} = 1$ for $k=0,1,2$. 

\subsection{ICF outperforms other methods}\label{sec:sim.icf}
Our numerical comparison studies were based on real datasets of genome-wide association studies downloaded from dbGaP.  
The details of the datasets can be found in Section~\ref{sec:iop}. 
Because the convergence of iterative methods is sensitive to the collinearity in the design matrix $\bX$,  our comparison studies used two datasets: the first one contains $20$K SNPs sampled across the whole genome, and the other contains $20$K SNPs that are physically adjacent.  
The first dataset has little or no collinearity (henceforth referred to as IND), and the second has collinearity due to linkage disequilibrium (henceforth referred to as DEP). 
The sample size is $n = 3000$ for both datasets. 
Our numerical studies compared different methods (detailed below) for their speed and accuracy of solving~\eqref{eq:beta2}. 
Given $p$, for one experiment we sampled without replacement $p$ columns from the IND (or DEP) dataset to obtain $\bX$,  simulated under the null $\bz \sim \MVN(\bzero, \bI)$, and solved~\eqref{eq:beta2} using different methods with $\bSigma = \mathrm{diag}(4, \dots, 4),$ which corresponded to $\sigma_\beta=0.5.$ 
For each $p$ we conducted $1000$ independent experiments.

Our initial studies compared ICF with six other methods: the Cholesky decomposition (Chol), the Jacobi method, the Gauss-Seidel method (GS), the successive over-Relaxation (SOR) method, the steepest descent method, and the conjugate gradient (CG) method.  
We excluded the Jacobi method and the steepest descent method due to their poor performance. 
To ensure a fair comparison in the context of BVSR, the starting point for Chol, GS, and SOR was $\bA = \bX^t \bX + \sigma^{-2}_\beta \bI$ being obtained, and the starting point for ICF was the upper triangular matrix $\bR$ such that $\bX^t \bX = \bR^t \bR$ being obtained.  
For GS, we tried the preconditioning method of~\citet{kohno1997improving}, which is the most efficient among the methods surveyed in~\citet{niki2004survey}, but we observed no improvement, most likely because $\bA$ is well-conditioned due to the regularization.
For SOR, we need choose a value for the relaxation parameter (denoted by $\omega_{\mathrm{SOR}}$), which is known to be very difficult. A solution was provided by~\citet{young1954iterative}~\citep[see also][]{yang2007optimal}, 
but we observed that it did not apply when $p>500$.
After trial and error, we settled on using $\omega_{\mathrm{SOR}} = 1.2$, which appeared to be optimal in our numerical studies. 
For all iterative methods, we started from $\hat{\bbeta}^{(0)} = \bzero$ and stopped if 
\begin{equation}\label{eq:stop}
\max\limits_{j} |  \hat{\bbeta}^{(k)}_j - \hat{\bbeta}^{(k-1)}_j  |  < 10^{-6},
\end{equation}
or the number of iterations exceeded $200$. 
The computer code for different methods was written in C++ and was run in the same environment.  
The Cholesky decomposition was implemented using GSL (GNU Scientific Library)~\citep{gsl}, and 
GS and SOR were implemented in a manner that accounted for the sparsity of the triangular matrices $\bL$ and $\bU$ to obtain maximum efficiency~\citep[Chap.~10.1.2]{golub2012matrix}. 
Lastly, we included in the comparison the LAPACK routine DGELS, the most widely used least square solver, as a baseline reference.

\begin{table}[h!]
\centering
\noindent
{\small
\begin{tabular}{clllllllllll } \toprule
\multirow{2}{*}{Dataset} & \multirow{2}{*}{$p$} & \multicolumn{6}{c}{Time (in seconds)} & \multicolumn{4}{c}{Convergence failures}  \\ 
\cmidrule(lr){3-8}
\cmidrule(lr){9-12}
&  &   {\footnotesize DGELS} &  {\footnotesize Chol} & {\footnotesize ICF} & {\footnotesize GS} & {\footnotesize SOR} & {\footnotesize CG} & {\footnotesize ICF}   & {\footnotesize GS}   & {\footnotesize SOR}  & {\footnotesize CG}  \\ 
\midrule
\multirow{5}{*}{IND}&50  & 7.3 & 0.034 & 0.019 & {\bf 0.016} & 0.025 & 0.054 & 0 & 0 & 0 & 0 \\
&100 & 26 &  0.20 & {\bf 0.07} & {\bf 0.07} & 0.09 & 0.24 & 0 & 0 & 0 & 0 \\
&200 & 115 &  1.38 & {\bf 0.30} & 0.34 & 0.39 & 1.15 & 0 & 1 & 1 & 0 \\
&500 & 679 & 21.0 & {\bf 2.7} & 3.6 & 2.9 & 9.8 & 0 & 2 & 1 & 0 \\
&1000 & 2948 & 161 & {\bf 14} & 36 & 25 & 60 & 0 & 11 & 7 & 0 \\
\midrule
\multirow{5}{*}{DEP} &50  & 7.3 &  0.035 & {\bf 0.020} & 0.029 & 0.031 & 0.056 & 0 & 8 & 6 & 0 \\
&100 & 27 & 0.20 & {\bf 0.08} & 0.23 & 0.19 & 0.26 & 0 & 29 & 20 & 0 \\
&200 & 116 & 1.39 & {\bf 0.45} & 2.13 & 1.69 & 1.34 & 0 & 125 & 93 & 0 \\
&500 & 683 & 21.0 & {\bf 5.5} & 36.1& 32.8 & 15.1 & 0 & 621 & 514 & 0 \\
&1000 & 2984 &  160 & {\bf 36} & 183 & 180 & 133 & 0 & 979 & 951 & 0 \\
\bottomrule
\end{tabular}
}

\caption{Wall time usage (in seconds) and numbers of convergence failures.  The top half is for the IND dataset and the bottom half is for the DEP dataset.  The statistics for each of the six methods were obtained from $1,000$ independent repeats. 
DGELS: LAPACKE\_dgels routine; 
Chol: Cholesky decomposition; ICF: iterative  complex factorization; GS: Gauss-Seidel method; SOR: successive over-relaxation; CG: conjugate gradient.
``Convergence failures" columns give the numbers of experiments that fail to converge within $200$ iterations for the four iterative methods. 
}\label{table:time2}
\end{table}

The results are summarized in Table~\ref{table:time2}. For the IND dataset, three iterative methods (ICF, GS and SOR) appear on par with each other for smaller $p$.  For $p=1000$, ICF outperforms the other two, and is 10 times faster than the Cholesky decomposition. 
On average it took ICF $5$ iterations to converge, and ICF never failed to converge in all experiments. 
On the other hand, both GS and SOR failed to converge, at least once, for large $p$.  
For the DEP dataset, we note that ICF is the fastest among all the methods compared. 
For larger $p$ $(p=500,1000)$, ICF is 4 -- 5 times faster than the Cholesky decomposition, and 5 -- 6 times faster than the other three methods. 
Both GS and SOR had difficulty in converging within $200$ iterations for large $p$. 
DGELS is always the slowest since it assumes $\bX^t \bX$ is unknown and solves~\eqref{eq:beta2} by QR decomposition. 
Advanced least square solvers~\citep{avron2010blendenpik, meng2014lsrn} have similar performance to DGELS and can only beat it by a small margin when $\bX$ is very large.
We also tweaked simulation conditions to check whether the results were stable. 
For example, we tried to simulate $\bz$ under the alternative instead of the null, and to initialize $\hat{\bbeta}^{(0)}$ in different manners, such as using unpenalized linear estimates. The results remained essentially unchanged under different tweaks.

\begin{figure}[htb!]
\begin{center}
\includegraphics[width=0.48\linewidth]{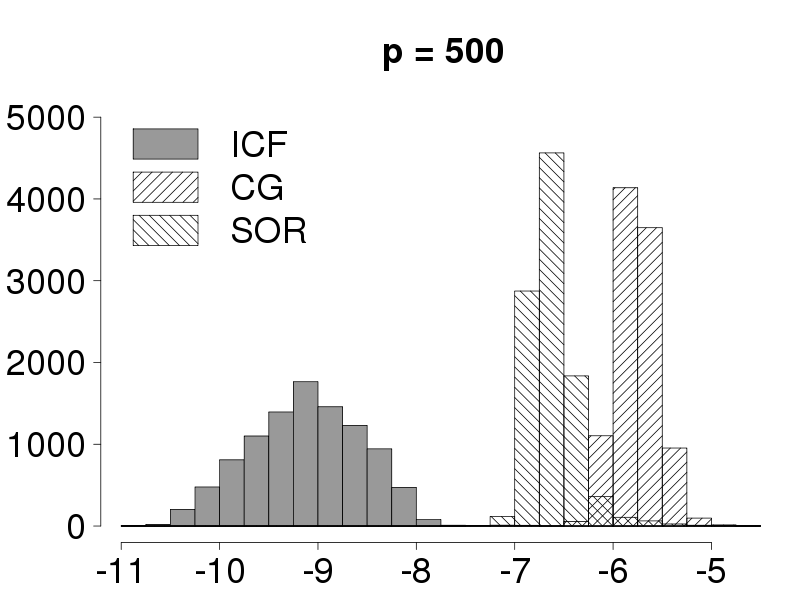} 
\includegraphics[width=0.48\linewidth]{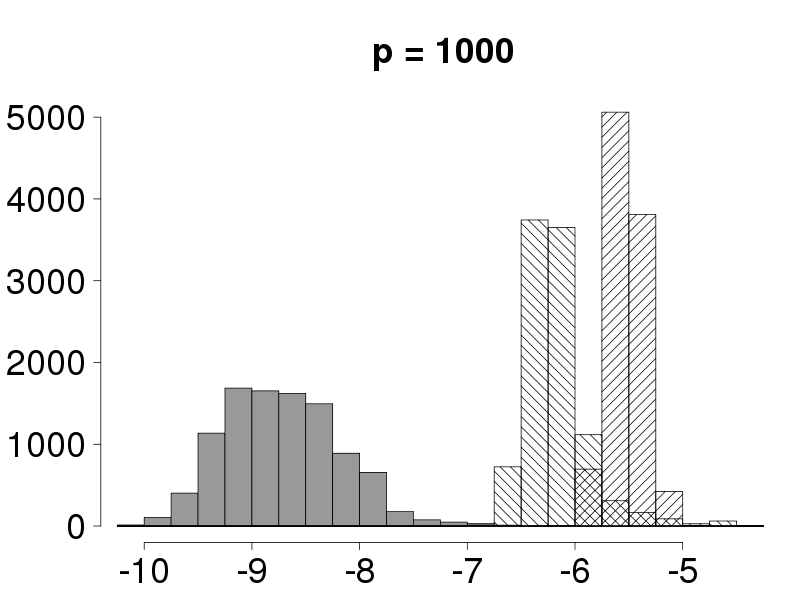}  
\caption{Comparison of the accuracy using the IND dataset. The three methods compared here are ICF (iterative  complex factorization), CG (conjugate gradient), and SOR (successive over-relaxation).  
Each panel shows the distributions of the maximum (entry-wise) absolute error on $\log_{10}$ scale. 
}\label{fig:acc}
\end{center}
\end{figure}

Next, we compared the accuracy of different iterative methods, where
the truth  was obtained from the Cholesky decomposition. 
In this experiment we used the IND dataset and compared for $p=500$ and $p=1000$. 
Each method was repeated for $10,000$ experiments. The maximum entry-wise deviation (denoted by $d$) was obtained for each method, each experiment, under each simulation condition. 
Only those converged experiments were included in the comparison. Figure~\ref{fig:acc} shows the distributions of $\log_{10}d$ for three methods. Clearly, ICF outperforms CG and SOR by a large margin. 
(GS is omitted since its accuracy is poorer than that of SOR in almost every experiment.) 
Because in real applications we are oblivious to the truth (or the truth is expensive to obtain), a method is more desirable if its deviation from the truth is within the pre-specified precision. 
Figure~\ref{fig:acc} shows that ICF always achieves the pre-specified precision ($10^{-6}$) while the other two methods do not. 
Moreover, the deviation of ICF is 2 -- 3  orders of magnitude smaller than that of SOR, and 3 -- 4 orders of magnitude smaller than that of the CG method. 
We repeated the experiments for the DEP dataset and made the same observations. 

\begin{figure}[htb!]
\begin{center}
\includegraphics[width=0.48\linewidth]{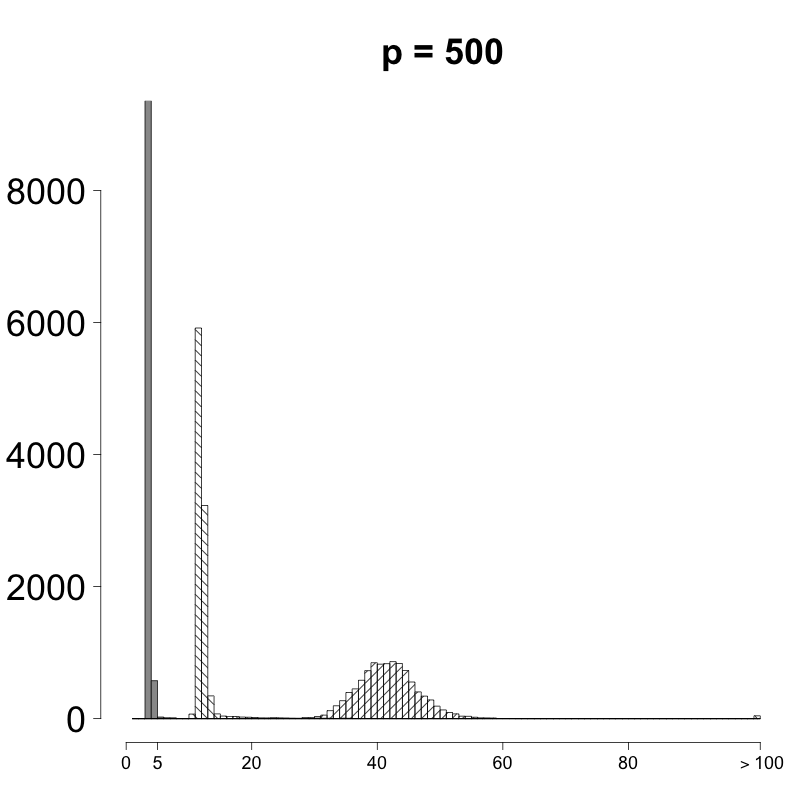} 
\includegraphics[width=0.48\linewidth]{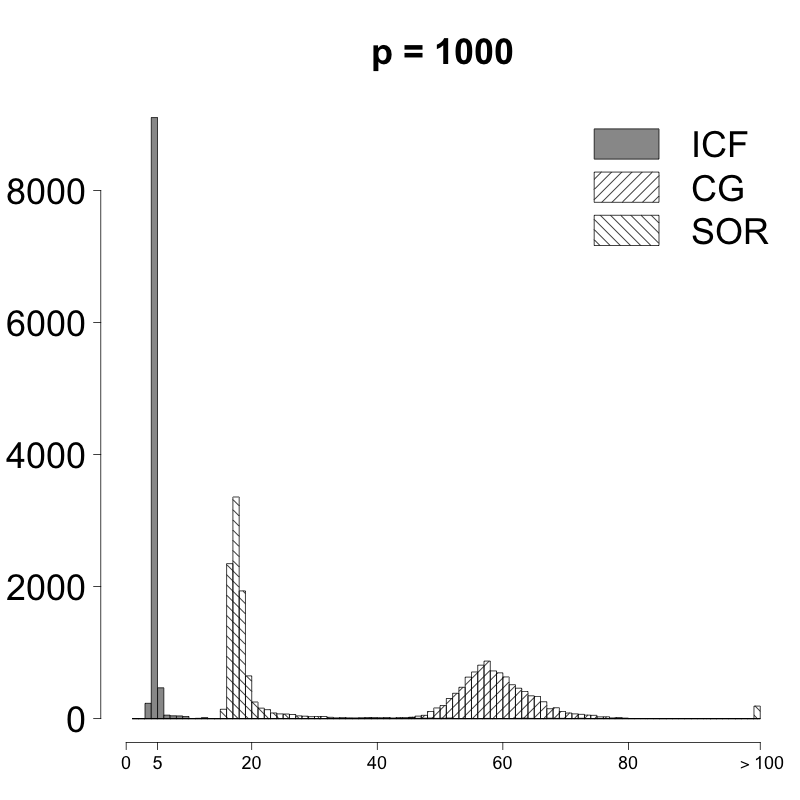}  
\caption{Comparison of the convergence speed using the IND dataset. 
Each panel shows the distributions of the number of iterations needed to stop by rule~\eqref{eq:stop}. 
}\label{fig:conv}
\end{center}
\end{figure}

Figure~\ref{fig:conv} compares the number of iterations used to converge for different iterative methods for large $p$. ICF takes a few iterations to converge, SOR a dozen iterations, CG about $50$ iterations. More numerical experiments can be found in Supplementary S2 where we investigated how $p$ and $\sigma_\beta$ affects convergence rates.
We also compared the performance of the iterative methods when the design matrix $X$ is beyond the counts of reference alleles, such as normal or log-normal distributed (Supplementary S1), and when $X$ has a different degree of collinearity (Supplementary S2.2).
ICF exhibits an overwhelming advantage in every scenario.

\section{ICF dramatically increases the speed of BVSR}
\subsection{Incorporating ICF into BVSR}\label{sec:bvsr.icf}
As we mentioned in the introduction, for BVSR the most time-consuming step in MCMC  is the computation of~\eqref{eq:bf} for every proposed $(\bgamma', h')$ (superscript $'$ denotes the proposed value). 
To successfully incorporate ICF into BVSR, we need to overcome two hurdles: 
avoiding the computation of the determinant term in~\eqref{eq:bf}  
and efficiently obtaining the Cholesky decomposition $\bR^t \bR = \bX^t_{\bgamma}\bX_{\bgamma}$. 

Computing matrix determinant has a cubic complexity. If this is not avoided, using ICF to compute $\hbb$ becomes pointless. 
When evaluating~\eqref{eq:bf}, we need to compute the determinant $|\bX^t_{\bgamma} \bX_{\bgamma} + \sigma_{\beta}^{-2}(\bgamma, h) \bI |,$ which takes $O(|\bgamma|^3)$ operations. 
Identifying that the determinant is a normalization constant, avoiding computation of the determinant becomes a well studied problem in the MCMC literature that deals with the so-called ``doubly-intractable distributions," i.e. distributions with two nested unknown normalization constants~\citep{moller2006efficient,murray2012mcmc}. 
Recall that we want to sample 
\begin{equation}
\begin{aligned}
\P( \bgamma  , h \mid \y) &\propto \P(\y  \mid  \bgamma, h) \P(\bgamma, h) \\
& \propto   |\bOmega(\bgamma, h)|^{-1/2}\;\sigma_\beta^{-|\bgamma|}(\bgamma, h)  \Biggl( 1 - \frac{ \y^t\X_{\bgamma} \bOmega^{-1}(\bgamma, h) \X_{\bgamma}^t \y}{\y^t \y -  n \bar{y}^2  } \Biggr)^{-n/2} \P(\bgamma, h),
\end{aligned}
\end{equation}
where   $\bOmega(\bgamma, h) =  \bX_{\bgamma}^t \bX_{\bgamma} + \sigma_\beta^{-2} (h,\bgamma)\I $, and it is the computation of $|\bOmega(\bgamma, h)|$ that we want to avoid.  
For a naive Metropolis-Hastings algorithm, if the current state is $(\bgamma, h)$ and the proposed move is $(\bgamma', h')$, we need to evaluate \emph{the Hastings ratio} 
\begin{equation}\label{eq:hastings}
\alpha= \frac{K(\bgamma, h \mid \bgamma', h')} {K(\bgamma', h'  \mid \bgamma, h)}
\frac{Z(\bgamma', h')}{Z(\bgamma, h)} \frac{L(\y, \bgamma', h')}{L(\y, \bgamma, h)} 
\frac{\P(\bgamma', h')} {\P(\bgamma, h)},
\end{equation}
where 
\begin{equation}
\begin{aligned}
Z(\bgamma, h) &=  |\bOmega(\bgamma, h)|^{-1/2} \sigma_\beta^{-|\bgamma| }(\bgamma, h), \\
L(\y, \bgamma, h ) &=  (\y^t\y - \y^t\bX_{\bgamma} \bOmega^{-1}(\bgamma, h) \bX_{\bgamma}^t \y  -  n \bar{y}^2  )^{-n/2}    , 
\end{aligned}
\end{equation}
and $K(\cdot' \mid \cdot)$ is the  
proposal distribution for proposing $\cdot'$ from $\cdot.$ 
The proposed move $(\bgamma', h')$ is accepted with probability $\min(1, \alpha),$ which is called \emph{the Metropolis rule}. Apparently, both $|\bOmega(\bgamma, h)|$ and $|\bOmega(\bgamma', h')|$ need to be evaluated in a naive MCMC implementation. 

To avoid computing $Z(\bgamma, h)$, notice that 
\begin{equation} \label{exchange}
\frac{Z(\bgamma', h')}{Z(\bgamma, h)} = \frac{\P(\tilde{\y} \mid \bgamma', h')}{\P(\tilde{\y} \mid \bgamma, h)} \frac{L( \tilde{\y}, \bgamma, h)}{L(\tilde{\y}, \bgamma', h')}
\end{equation}
holds for all $\tilde{\y}$, and if $\tilde{\y}$ is sampled from $\P(\cdot \mid \bgamma', h')$, the ratio ${L(\tilde{\y}, \bgamma, h)}/{L(\tilde{\y}, \bgamma', h')}$ becomes a one-sample importance-sampling estimate of ${Z( \bgamma', h')}/{Z( \bgamma, h)}$. 
Hence we can plug in  ${L(\tilde{\y}, \bgamma, h)}/{L(\tilde{\y}, \bgamma', h')}$ to replace ${Z( \bgamma' , h')}/{Z( \bgamma, h)}$  
and compute the Hastings ratio by 
\begin{equation}\label{eq:alpha2}
\alpha(\bgamma, h, \bgamma', h', \tilde{y})  = \frac{K(\bgamma, h \mid \bgamma', h')} {K(\bgamma', h'  \mid \bgamma, h)}
\frac{L( \tilde{\y}, \bgamma, h) }{L(\tilde{\y},  \bgamma', h')} \frac{L(\y, \bgamma', h')}{L(\y, \bgamma, h)} 
\frac{\P(\bgamma', h')} {\P( \bgamma, h)}. 
\end{equation}
This is the exchange algorithm of~\cite{murray2012mcmc}, which can be viewed as a special case of the pseudo-marginal method~\citep{andrieu2009pseudo}. 
In the Appendix we prove that the stationary distribution of this MCMC is the desired posterior distribution $\P( \bgamma , h \mid \y). $  
Since the Bayes factor~\eqref{eq:bf}  is invariant to the scaling and shifting of $\by$, we can simply assume $\mu = 0$ and $\tau = 1$ when sampling $\tilde{\by}$.  

To tackle the second difficulty, notice that in each MCMC iteration, usually only one or a few entries of $\bgamma$ are flipped between $0$ and $1$. Such proposals are often called ``local'' proposals, and Markov chains using local proposals usually have high acceptance ratios, which indicate the chains are ``mixing" well~\citep{guan.krone.07}. 
When one covariate is added into the model, the new Cholesky decomposition, $(\bR')^t \bR'$, can be obtained from the previous decomposition $\bR^t \bR$ by a forward substitution. 
When one covariate is deleted from the model, we simply delete the corresponding column from $\bR$ and introduce new zeros by Givens rotation~\citep[Chap.~5.1]{golub2012matrix}, which requires $\approx 3k^2$ flops if the $(|\bgamma| - k)$-th  predictor is removed. We provide a toy example in Appendix to demonstrate the update on the Cholesky decomposition.

\subsection{Numerical examples}
We developed a software package, fastBVSR, to fit the BVSR model by incorporating ICF and the exchange algorithm into the MCMC procedure. The algorithm is summarized in Supplementary S4. 
The software is written in C++ and available at \url{http://www.haplotype.org/software.html}. 
In fastBVSR, we also implemented Rao-Blackwellization, as described in~\citet{guan2011bayesian}, to reduce the variance of the estimates for $\bgamma$ and $\bbeta$.  
By default, Rao-Blackwellization is done every 1000 iterations. 

To check the performance of fastBVSR, we performed simulation studies based on a real dataset described in~\citet{yang2010common} (henceforth the Height dataset). The Height dataset contains $3,925$ subjects and $294,831$ common SNPs (minor allele frequency $\geq 5\%$) after routine quality control~\cite[c.f.][]{xu.guan.14}.  We sampled $10,000$ SNPs across the genome to perform simulation studies. 
Our aim was to check whether fastBVSR can reliably estimate the heritability in the simulated phenotypes. 
To simulate phenotypes of different heritability, we randomly selected $200$ causal SNPs (out of $10,000$) to obtain $\bgamma$.  For each selected SNP we drew its effect size from the standard normal distribution to obtain $\bbeta_{\bgamma}$ (the subvector of $\bbeta$ that contains nonzero entries). Then we simulated the standard normal error term to obtain $\bepsilon$, and scaled the simulated effect sizes simultaneously using $\lambda$ such that $\by = \lambda \bX_{\bgamma} \bbeta_{\bgamma}   + \bepsilon$ and the heritability, $1 - \Var(\bepsilon)/ \Var(\by)$, was $h$ (taking values in $0.01, 0.02, \dots, 0.99$). This was the same procedure as that of~\cite{guan2011bayesian}. 
For each simulated phenotype, we ran fastBVSR to obtain the posterior estimate of heritability.  
We compared fastBVSR with GCTA~\citep{yang2011gcta}, which is a software package to estimate 
heritability and do prediction using the linear mixed model. 
For heritability estimation, GCTA has been shown to be unbiased and accurate in a wide range of settings~\citep{yang2010common,  lee2011estimating}. 
The result is shown in Figure~\ref{fig:herit1}. 
Both fastBVSR and GCTA can estimate the heritability accurately; the mean absolute error is $0.014$ for fastBVSR and $0.029$ for GCTA.
Noticeably, fastBVSR has a smaller variance but a slight bias when the true heritability is large. 

\begin{figure}[htbp!]
\begin{center}
\includegraphics[width=0.45\linewidth]{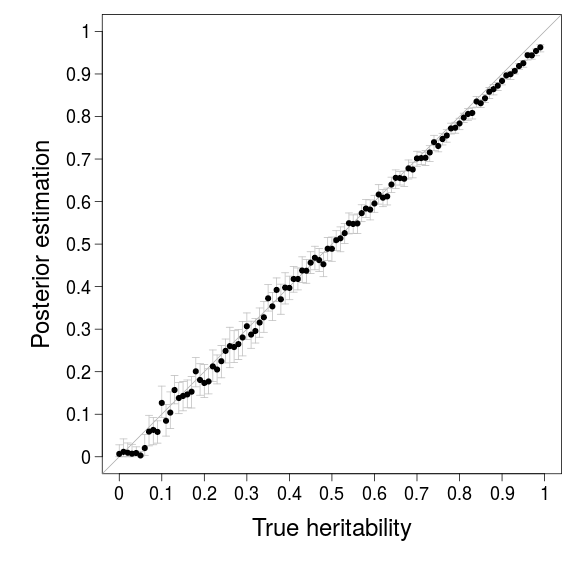}
\includegraphics[width=0.45\linewidth]{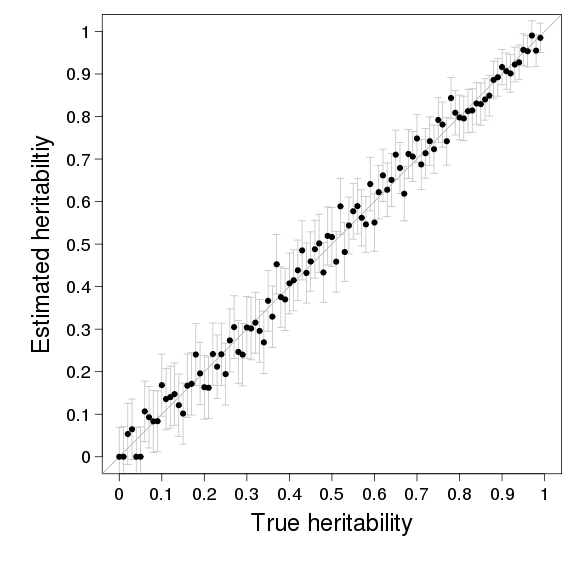}\\
\caption{Heritability estimates. The left panel is the result of fastBVSR, obtained using $20,000$ burn-in steps and $100,000$ sampling steps, and the right panel is the result of GCTA. 
In each iteration of BVSR, the heritability is estimated by computing the proportion of explained variance of $\by$ using the sampled parameter values. 
The grey bars represent $95$\% credible intervals for fastBVSR and $\pm 2$ standard error for GCTA. }\label{fig:herit1}
\end{center}
\end{figure}

Next we compare the predictive performance of fastBVSR and GCTA. 
We first define the mean squared error as a function of $\hat{\bbeta}$, 
\begin{equation*}
\mathrm{MSE} (\hat{\bbeta})\, \define \, \dfrac{1}{n} || \bX\bbeta - \bX \hat{\bbeta}  ||^2_2.
\end{equation*}
Then following~\citet{guan2011bayesian} we define relative prediction gain (RPG) to measure the predictive performance of $\hat{\bbeta}$ 
\begin{equation*}
\mathrm{RPG}  \, \define \,  \dfrac{ \mathrm{MSE}(\bm{0}) - \mathrm{MSE}( \hat{\bbeta} ) }{\mathrm{MSE}(\bm{0}) - \mathrm{MSE}(\bbeta) }.  
\end{equation*}
The advantage of RPG is that the scaling of $\by$ does not contribute to RPG so that simulations with different heritability can be compared fairly. Clearly when $\hat{\bbeta} = \bbeta$, $\mathrm{RPG} = 1$; when $\hat{\bbeta} = \bm{0}$, $\mathrm{RPG} = 0$.  
Figure~\ref{fig:rpg} shows that fastBVSR has much better predictive performance than GCTA, which reflects the advantage of BVSR over the linear mixed model. 
This advantage owes to the model averaging used in BVSR~\citep{raftery1997bayesian, broman2002model}. Besides, note that BVSR with Rao-Blackwellization performs slightly better than BVSR alone. 

\begin{figure}[htbp!]
\begin{center}
\includegraphics[width=0.5\linewidth]{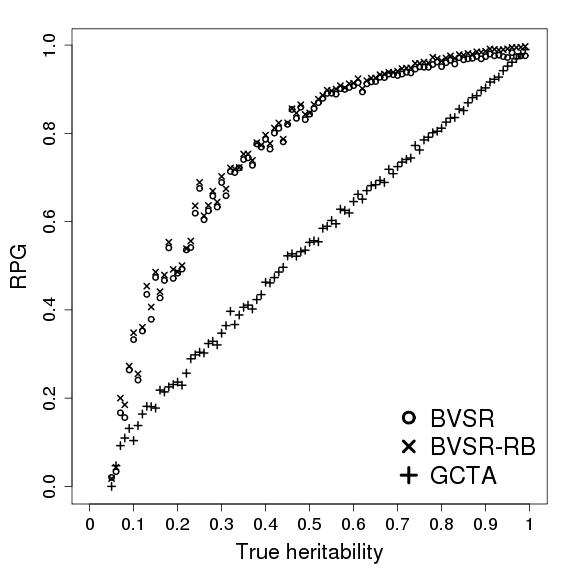}
\caption{Relative prediction gain (RPG) of fastBVSR and GCTA. 
BVSR stands for the RPG of the crude posterior mean estimates from fastBVSR; BVSR-RB represents the RPG of the Rao-Blackwellized estimates from fastBVSR; for GCTA, RPG is computed using the BLUPs (best linear unbiased predictors) from linear mixed model. 
}\label{fig:rpg}
\end{center}
\end{figure}

Lastly we check the calibration of the posterior inclusion probability (PIP), $\P(\gamma_j = 1 \mid \by)$. We pool the results of the $99$ simulation sets with different heritability (from $0.01$ to $0.99$ at increment of $0.01$).  In each set there are $10,000$ estimated PIPs, one for each SNP, and in total there are $990,000$ PIP estimates. 
We group these PIPs into $20$ bins, $[0.05 \times (i-1), 0.05 \times i)$ for $i = 1, \dots, 20$, and for each bin we compute the fraction of true positives. If the PIP estimates are well calibrated, we expect that in each bin, the average of the PIPs roughly equals the fraction of true positives. 
Figure~\ref{fig:calib} shows that the PIP estimated by BVSR is conservative, which agrees with the previous study~\citep{guan2011bayesian}, and the PIP estimated by Rao-Blackwellization is well cablibrated. 

\begin{figure}[ht!]
\begin{center}
\includegraphics[width=0.45\linewidth]{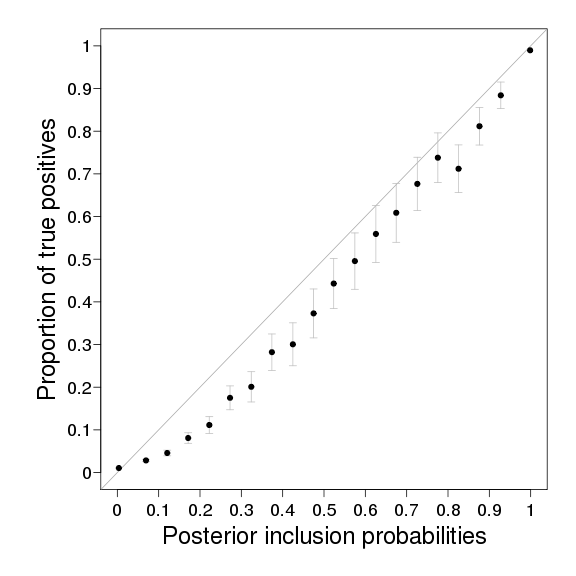}
\includegraphics[width=0.45\linewidth]{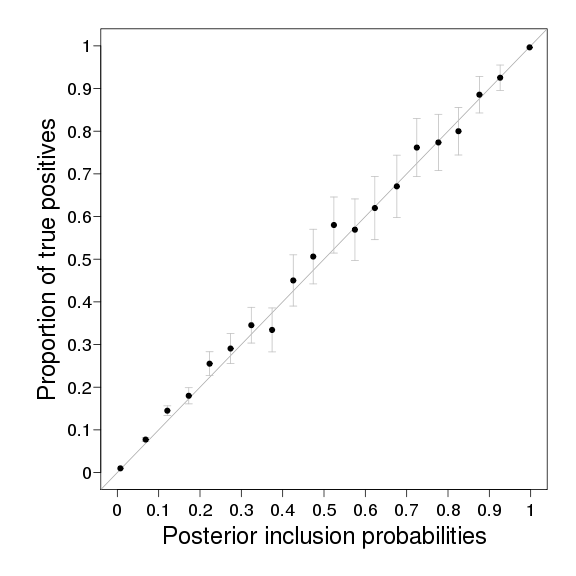}
\caption{Calibration of the posterior inclusion probability (PIP). 
In the left panel PIP is estimated directly from MCMC samples, and in the right panel PIP is estimated using Rao-Blackwellization to reduce variance. 
All the PIP estimates are pooled into 20 bins (see the main text for details). 
X-axis denotes the mean PIP of each bin and the Y-axis denotes the fraction of true positives in each bin. 
The grey bars represent $\pm 2$ standard error of the estimated proportion of true positives in each bin.  
The grey line is $y = x$. 
}\label{fig:calib}
\end{center}
\end{figure}

\subsection{Analysis of real GWAS datasets}\label{sec:iop}
We applied fastBVSR to analyze GWAS datasets concerning intraocular pressure (IOP).  We applied and downloaded two GWAS datasets from the database of Genotypes and Phenotypes (dbGaP), one for glaucoma (dbGaP accession number: phs000238.v1.p1) and the other for intraocular hypertension (dbGaP accession number: phs000240.v1.p1). 
The combined dataset contains $301,143$ autosomal SNPs and $3,226$ subjects, and was used previously by~\citet{zhou.guan.jasa} to examine the relationship between p-values and Bayes factors.

We conducted five independent runs with different random seeds. 
The sparsity parameters for BVSR were set as $\pi_{min} = 0.0001$ and $\pi_{max} = 0.01$, which reflected a prior of 30 -- 3000 marginally associated SNPs. 
The posterior mean model size in each run ranges from $430$ to $622$~(Table~\ref{table:iop}). 
The average number of iterations for ICF to converge is $7.3$ ($95$\% interval = $[4, 13]$), which suggests that ICF is effective in analyzing real datasets. 
The speed improvement of fastBVSR over piMASS (a BVSR implementation that uses the Cholesky decomposition to fit Bayesian linear regression, a companion software package of~\cite{guan2011bayesian}) is dramatic: with model size  around $500$ and more than $3000$ individuals, fastBVSR took $14$ hours to run $2.1$ million MCMC iterations, compared to more than $1000$ hours used by piMASS on problems of matching size. 
Out of concern for the cumulative error in obtaining the matrix $\bR$ by updating the Cholesky decomposition (detailed in Section~\ref{sec:bvsr.icf}), we compared every $1000$ steps the updated Cholesky decomposition with the directly computed Cholesky decomposition, and results suggested that the entry-wise difference was absolutely negligible.  

We examine the inference of the hyperparameters, namely, the narrow-sense heritability $h$, the model size $|\bgamma|$, and the sparsity parameter $\pi$. 
Table~\ref{table:iop} shows the inferences of the three parameters are more or less consistent across five independent runs,  and thus we combine the results from five independent runs in the following discussion.     
The posterior mean for $h$ is $0.28$ with $95\%$ credible interval $(0.1, 0.44).$ This estimate of the heritability is smaller than those reported in the literature:  $0.62$ from a twin study~\citep{carbonaro2008heritability}, $0.29$ from a sibling study~\citep{chang2005determinants}, and $0.35$ from an extended pedigree study~\citep{van2007genetic}. For comparison, using the same dataset, the heritability estimated by GCTA is $0.47$ with standard error $0.11$.  The underestimation of $h$ using fastBVSR is perhaps due to over-shrinking of the effect sizes (the posterior mean of prior effect size $\sigma_\beta$ is $0.048$). 
The posterior mean for $\pi$ is $0.0016$, which suggests that IOP is a very much polygenic phenotype, agreeing with the complex etiology of glaucoma~\citep{glaucoma.14}, to which high IOP is a precursor phenotype.

\begin{table}[h!]
\centering
\noindent
\begin{tabular}{ cllllll } \toprule
  &  Combined    & Run1 & Run2 & Run3 & Run4 & Run5\\
$h  $  &  0.28  & 0.27 & 0.26 & 0.32 & 0.28 & 0.26 \\ 
$|\bgamma| $ & 474 &  446 &  395 &  622 &  476 & 430  \\
$\pi$ & 0.0016 &  0.0015 &  0.0013 &  0.0021 & 0.0016 & 0.0014 \\
\bottomrule
\end{tabular}
\caption{Posterior mean estimates for the hyperparameters: heritability $h$, model size $|\bgamma|$, and sparsity parameter $\pi$.  
In each MCMC iteration, heritability is re-estimated by computing the proportion of explained variance of $\by$ using the sampled parameters. 
The first column gives the combined results from the five independent runs; each run has $2.1$ million MCMC steps, including $0.1$ million burn-in steps. 
}\label{table:iop}
\end{table}

Lastly, we examine the top marginal signals detected by fastBVSR. Table~\ref{table:iop2} lists top $11$ SNPs based on PIP inferred from fastBVSR, at an arbitrary PIP threshold of $0.25$. Table~\ref{table:iop2} also includes PPA (posterior probability of association) based on the single SNP test for two choices of prior effect sizes: $\sigma_\beta = 0.05$ which is the posterior mean of $\sigma_\beta$ inferred from fastBVSR and $\sigma_\beta=0.5$ which is a popular choice for single SNP analysis.  We use the posterior mean $\pi = 0.0016$ inferred from fastBVSR as the prior odds of association. Multiplying a Bayes factor and the prior odds we obtain the posterior odds, from which we obtain the $\PPA = (posterior\;odds) / (1 + posterior \;odds).$  
First the two columns of PPAs are highly similar in both ranking and magnitude. This observation agrees with our experience and theoretical studies~\citep{zhou.guan.jasa} that modest prior effect sizes produce similar evidence for association. Second, PPA and PIP have significantly different rankings (Wilcox rank test p-value $= 0.025$). There are two related explanations for this observation: if two SNPs are correlated, the PIPs tend to split between the two; conditioning on the other associated SNPs, the marginal associations tend to change significantly.  
Three known GWAS signals, \emph{rs12150284, rs2025751, rs7518099/rs4656461}, which were replicated in our previous single-SNP analysis~\citep{zhou.guan.jasa},  all show appreciable PIPs. 
Moreover, we note a potential novel association first reported in~\citet{zhou.guan.jasa}:  
two SNPs located in \emph{PEX14}, \emph{rs12120962,  rs12127400}, have PIPs of $0.57$ and $0.27$ respectively.

\begin{table}[h!]
\centering
\noindent
\begin{tabular}{ lllllll } \toprule
SNP &  Chr   & Pos (Mb) & $\PPA_{\sigma_\beta=0.5}$ &$\PPA_{\sigma_\beta=0.05}$ &  $\PIP_{\mathrm{BVSR}}$ \\
{\bf rs12120962}   & 1 & 10.53  & 0.85  & 0.84  & 0.57  \\
{\bf rs12127400}   & 1 & 10.54  & 0.74  & 0.74   & 0.27  \\
{\bf rs4656461}   & 1 & 163.95  & 1.0 &  0.96 & 0.36  \\
{\bf rs7518099}   & 1 & 164.00  & 1.0  & 0.98 &  0.42  \\
rs7645716          & 3 & 46.31    & 0.62 & 0.56 &  0.35  \\
{\bf rs2025751}    & 6 & 51.73   & 0.81  & 0.82   & 0.68  \\
rs10757601         & 9 & 26.18    & 0.47  & 0.55  & 0.31  \\
rs9783190        & 10 & 106.76     & 0.39  & 0.31  & 0.32  \\
rs1381143         & 12 & 86.76  & 0.27 & 0.37  & 0.31  \\
rs4984577         & 15 & 93.76  & 0.43  & 0.49   & 0.42  \\
{\bf rs12150284}  & 17 & 9.97    & 0.99  & 0.97  & 0.92  \\
\bottomrule
\end{tabular}
\caption{SNPs with (Rao-Blackwellized) $\PIP >0.25$. Chr is short for chromosome, and Pos is short for position, which  is measured in mega-basepair (Mb) according to HG18. 
$\PPA$ stands for the posterior inclusion probability. Computing PPA requires the Bayes factor and the prior odds.  For prior odds we use $0.0016$, which is the posterior mean of $\pi$ inferred from fastBVSR. Bayes factors are computed using two priors effect sizes ($\sigma_\beta=0.05, 0.5$).  
SNPs that are mentioned in the main text are highlighted in bold. 
}\label{table:iop2}
\end{table}

\section{Discussion}\label{sec:disc}
We developed a novel algorithm, iterative complex factorization, to solve a class of penalized linear systems, proved its convergence, and demonstrated its effectiveness and efficiency through simulation studies.  The novel algorithm can dramatically increase the speed of BVSR, which we demonstrated by analyzing a real dataset. 
In our simulation studies we only considered $n > p$ 
(in BVSR $p = |\bgamma|$ is the size of the selected model). 
When $n \le p$, the ICF can be implemented with the dual variable method of~\citet{saunders1998ridge}~\citep[see also][]{lu2013faster}, which hinges on the identity  
$(\bX^t_{\bgamma} \bX_{\bgamma} + \sigma^{-2}_\beta \bI)^{-1} \bX^t_{\bgamma} \by = \bX^t_{\bgamma} ( \bX_{\bgamma} \bX^t_{\bgamma} + \sigma^{-2}_\beta \bI )^{-1} \by$.

One limitation of ICF is that it requires the Cholesky decomposition to obtain the effective complex factorization. Nevertheless, ICF still has a range of potential applications.  The general ridge regression~\citep{draper1979ridge} is an obvious one, and here we point out two other examples. The first is the Bayesian sparse linear mixed model (BSLMM) of~\citet{zhou2013polygenic}, which can be viewed as a generalization of BVSR. BSLMM has two components, one is the linear component that corresponds to $\bX_{\bgamma}$ and the other is the random effect component.  The linear component requires one to solve a system that is similar to~\eqref{eq:beta2}~\citep[Text S2][]{zhou2013polygenic}; therefore ICF can be applied to BSLMM to make it more efficient. 
Another example of ICF application is using the variational method to fit the BVSR~\citep{carbonetto2012scalable,huang2016variational}. 
In particular, \citet{huang2016variational} estimated  $(\bgamma, \bbeta)$ using an iterative method.  
In each iteration, the posterior mean for $\bbeta$  is updated by solving a linear system that has the same form as~\eqref{eq:beta2}, where the dimension is equal to the number of covariates with nonzero PIP estimates,  and the diagonal matrix $\bSigma$ depends on the PIP estimates.  
Applying ICF to the variational method might be more beneficial compared to BVSR because the dimension of the linear system is exceedingly larger in the variational method.

The iterative complex factorization method converges almost instantaneously, and it is far more accurate than other iterative algorithms such as the Gauss-Seidel method. Why does it converge so fast and at the same time so accurate? 
Our intuition is that the imaginary part of the update in~\eqref{eq:icf.iter2} , which involves the skew-symmetric matrix $\bS=\bR^t\bSigma -\bSigma \bR$, whose Youla decomposition~\citep{youla} suggests that it rotates $\hbb^{(k)}$, scales and flips its coordinates and rotates back, and which then appends itself to be the imaginary part of $\bz$, provides a ``worm hole" through the imaginary dimension in an otherwise purely real, unimaginary, solution path. Understanding how this works mathematically will have a far-reaching effect on Bayesian statistics, statistical genetics, machine learning, and computational biology.

Finally, we hope our new software fastBVSR, which is 10 -- 100 times faster than the previous implementation of BVSR, will facilitate the wide-spread use of BVSR in analyzing or reanalyzing datasets from genome-wide association (GWA) and expression quantitative trait loci (eQTL) studies.

\section*{Supplementary material}
The online appendix includes additional numerical studies (S1 -- S3), a summary of the MCMC algorithm (S4), examples for updating the Cholesky decomposition (S5) and a proof for the stationary distribution of the exchange algorithm (S6). 
S1 compares the performance of the iterative methods using simulated datasets; S2 studies the behavior of the convergence rates of the iterative methods; S3 provides numerical evidence for the assumption $\eta_{min}^2 \approx 0$ used in Section~\ref{sec:icf.conv}. 

\bibliographystyle{ba}
\bibliography{reference}

\begin{thebibliography}{41}
\newcommand{\enquote}[1]{``#1''}
\expandafter\ifx\csname natexlab\endcsname\relax\def\natexlab#1{#1}\fi
\expandafter\ifx\csname url\endcsname\relax
  \def\url#1{{\tt #1}}\fi
\expandafter\ifx\csname urlprefix\endcsname\relax\def\urlprefix{URL }\fi
\ifx\endbibitem\undefined \let\endbibitem\relax\fi

\bibitem[{Andrieu and Roberts(2009)}]{andrieu2009pseudo}
Andrieu, C. and Roberts, G.~O. (2009).
\newblock \enquote{The pseudo-marginal approach for efficient Monte Carlo
  computations.}
\newblock {\em The Annals of Statistics\/}, 697--725.
\endbibitem

\bibitem[{Avron et~al.(2010)Avron, Maymounkov, and
  Toledo}]{avron2010blendenpik}
Avron, H., Maymounkov, P., and Toledo, S. (2010).
\newblock \enquote{Blendenpik: Supercharging LAPACK's least-squares solver.}
\newblock {\em SIAM Journal on Scientific Computing\/}, 32(3): 1217--1236.
\endbibitem

\bibitem[{Berger et~al.(2001)Berger, Pericchi, Ghosh, Samanta, De~Santis,
  Berger, and Pericchi}]{berger2001objective}
Berger, J.~O., Pericchi, L.~R., Ghosh, J., Samanta, T., De~Santis, F., Berger,
  J., and Pericchi, L. (2001).
\newblock \enquote{Objective Bayesian methods for model selection: introduction
  and comparison.}
\newblock {\em Lecture Notes-Monograph Series\/}, 135--207.
\endbibitem

\bibitem[{Broman and Speed(2002)}]{broman2002model}
Broman, K.~W. and Speed, T.~P. (2002).
\newblock \enquote{A model selection approach for the identification of
  quantitative trait loci in experimental crosses.}
\newblock {\em Journal of the Royal Statistical Society: Series B (Statistical
  Methodology)\/}, 64(4): 641--656.
\endbibitem

\bibitem[{Carbonaro et~al.(2008)Carbonaro, Andrew, Mackey, Spector, and
  Hammond}]{carbonaro2008heritability}
Carbonaro, F., Andrew, T., Mackey, D., Spector, T., and Hammond, C. (2008).
\newblock \enquote{Heritability of intraocular pressure: a classical twin
  study.}
\newblock {\em British Journal of Ophthalmology\/}, 92(8): 1125--1128.
\endbibitem

\bibitem[{Carbonetto and Stephens(2012)}]{carbonetto2012scalable}
Carbonetto, P. and Stephens, M. (2012).
\newblock \enquote{Scalable variational inference for Bayesian variable
  selection in regression, and its accuracy in genetic association studies.}
\newblock {\em Bayesian analysis\/}, 7(1): 73--108.
\endbibitem

\bibitem[{Chang et~al.(2005)Chang, Congdon, Wojciechowski, Mu{\~n}oz, Gilbert,
  Chen, Friedman, and West}]{chang2005determinants}
Chang, T.~C., Congdon, N.~G., Wojciechowski, R., Mu{\~n}oz, B., Gilbert, D.,
  Chen, P., Friedman, D.~S., and West, S.~K. (2005).
\newblock \enquote{Determinants and heritability of intraocular pressure and
  cup-to-disc ratio in a defined older population.}
\newblock {\em Ophthalmology\/}, 112(7): 1186--1191.
\endbibitem

\bibitem[{Draper and Van~Nostrand(1979)}]{draper1979ridge}
Draper, N.~R. and Van~Nostrand, R.~C. (1979).
\newblock \enquote{Ridge regression and James-Stein estimation: review and
  comments.}
\newblock {\em Technometrics\/}, 21(4): 451--466.
\endbibitem

\bibitem[{Eld{\'e}n(1977)}]{elden1977algorithms}
Eld{\'e}n, L. (1977).
\newblock \enquote{Algorithms for the regularization of ill-conditioned least
  squares problems.}
\newblock {\em BIT Numerical Mathematics\/}, 17(2): 134--145.
\endbibitem

\bibitem[{Golub and Van~Loan(2012)}]{golub2012matrix}
Golub, G.~H. and Van~Loan, C.~F. (2012).
\newblock {\em Matrix computations, 3rd edition\/}.
\newblock JHU Press.
\endbibitem

\bibitem[{Gough(2009)}]{gsl}
Gough, B. (2009).
\newblock {\em GNU scientific library reference manual\/}.
\newblock Network Theory Ltd.
\endbibitem

\bibitem[{Guan and Krone(2007)}]{guan.krone.07}
Guan, Y. and Krone, S.~M. (2007).
\newblock \enquote{Small-world MCMC and Convergence to Multi-modal
  Distributions: From Slow Mixing to Fast Mixing.}
\newblock {\em Annals of Applied Probability\/}, 17: 284--304.
\endbibitem

\bibitem[{Guan and Stephens(2011)}]{guan2011bayesian}
Guan, Y. and Stephens, M. (2011).
\newblock \enquote{Bayesian variable selection regression for genome-wide
  association studies and other large-scale problems.}
\newblock {\em The Annals of Applied Statistics\/}, 1780--1815.
\endbibitem

\bibitem[{Hawkins and Yin(2002)}]{hawkins2002faster}
Hawkins, D.~M. and Yin, X. (2002).
\newblock \enquote{A faster algorithm for ridge regression of reduced rank
  data.}
\newblock {\em Computational statistics \& data analysis\/}, 40(2): 253--262.
\endbibitem

\bibitem[{Huang et~al.(2016)Huang, Wang, and Liang}]{huang2016variational}
Huang, X., Wang, J., and Liang, F. (2016).
\newblock \enquote{A Variational Algorithm for Bayesian Variable Selection.}
\newblock {\em arXiv preprint arXiv:1602.07640\/}.
\endbibitem

\bibitem[{Ibrahim and Laud(1991)}]{ibrahim1991bayesian}
Ibrahim, J.~G. and Laud, P.~W. (1991).
\newblock \enquote{On Bayesian analysis of generalized linear models using
  Jeffreys's prior.}
\newblock {\em Journal of the American Statistical Association\/}, 86(416):
  981--986.
\endbibitem

\bibitem[{Kohno et~al.(1997)Kohno, Kotakemori, Niki, and
  Usui}]{kohno1997improving}
Kohno, T., Kotakemori, H., Niki, H., and Usui, M. (1997).
\newblock \enquote{Improving the modified Gauss-Seidel method for Z-matrices.}
\newblock {\em Linear Algebra and its Applications\/}, 267: 113--123.
\endbibitem

\bibitem[{Lawson and Hanson(1995)}]{lawson1995solving}
Lawson, C.~L. and Hanson, R.~J. (1995).
\newblock {\em Solving least squares problems\/}.
\newblock SIAM.
\endbibitem

\bibitem[{Lee et~al.(2011)Lee, Wray, Goddard, and Visscher}]{lee2011estimating}
Lee, S.~H., Wray, N.~R., Goddard, M.~E., and Visscher, P.~M. (2011).
\newblock \enquote{Estimating missing heritability for disease from genome-wide
  association studies.}
\newblock {\em The American Journal of Human Genetics\/}, 88(3): 294--305.
\endbibitem

\bibitem[{Liang et~al.(2008)Liang, Paulo, Molina, Clyde, and
  Berger}]{liang2008mixtures}
Liang, F., Paulo, R., Molina, G., Clyde, M.~A., and Berger, J.~O. (2008).
\newblock \enquote{{Mixtures of g priors for Bayesian variable selection}.}
\newblock {\em Journal of the American Statistical Association\/}, 103(481).
\endbibitem

\bibitem[{Lu et~al.(2013)Lu, Dhillon, Foster, and Ungar}]{lu2013faster}
Lu, Y., Dhillon, P., Foster, D.~P., and Ungar, L. (2013).
\newblock \enquote{Faster ridge regression via the subsampled randomized
  hadamard transform.}
\newblock In {\em Advances in neural information processing systems\/},
  369--377.
\endbibitem

\bibitem[{Meng et~al.(2014)Meng, Saunders, and Mahoney}]{meng2014lsrn}
Meng, X., Saunders, M.~A., and Mahoney, M.~W. (2014).
\newblock \enquote{LSRN: A parallel iterative solver for strongly over-or
  underdetermined systems.}
\newblock {\em SIAM Journal on Scientific Computing\/}, 36(2): C95--C118.
\endbibitem

\bibitem[{M{\o}ller et~al.(2006)M{\o}ller, Pettitt, Reeves, and
  Berthelsen}]{moller2006efficient}
M{\o}ller, J., Pettitt, A.~N., Reeves, R., and Berthelsen, K.~K. (2006).
\newblock \enquote{An efficient Markov chain Monte Carlo method for
  distributions with intractable normalising constants.}
\newblock {\em Biometrika\/}, 93(2): 451--458.
\endbibitem

\bibitem[{Murray et~al.(2012)Murray, Ghahramani, and MacKay}]{murray2012mcmc}
Murray, I., Ghahramani, Z., and MacKay, D. (2012).
\newblock \enquote{MCMC for doubly-intractable distributions.}
\newblock {\em arXiv preprint arXiv:1206.6848\/}.
\endbibitem

\bibitem[{Niki et~al.(2004)Niki, Harada, Morimoto, and
  Sakakihara}]{niki2004survey}
Niki, H., Harada, K., Morimoto, M., and Sakakihara, M. (2004).
\newblock \enquote{The survey of preconditioners used for accelerating the rate
  of convergence in the Gauss--Seidel method.}
\newblock {\em Journal of Computational and Applied Mathematics\/}, 164:
  587--600.
\endbibitem

\bibitem[{O'Hagan and Forster(2004)}]{o2004kendall}
O'Hagan, A. and Forster, J.~J. (2004).
\newblock {\em Kendall's advanced theory of statistics, volume 2B: Bayesian
  inference\/}, volume~2.
\newblock Arnold.
\endbibitem

\bibitem[{Raftery et~al.(1997)Raftery, Madigan, and
  Hoeting}]{raftery1997bayesian}
Raftery, A.~E., Madigan, D., and Hoeting, J.~A. (1997).
\newblock \enquote{Bayesian model averaging for linear regression models.}
\newblock {\em Journal of the American Statistical Association\/}, 92(437):
  179--191.
\endbibitem

\bibitem[{Rokhlin and Tygert(2008)}]{rokhlin2008fast}
Rokhlin, V. and Tygert, M. (2008).
\newblock \enquote{A fast randomized algorithm for overdetermined linear
  least-squares regression.}
\newblock {\em Proceedings of the National Academy of Sciences\/}, 105(36):
  13212--13217.
\endbibitem

\bibitem[{Saunders et~al.(1998)Saunders, Gammerman, and
  Vovk}]{saunders1998ridge}
Saunders, C., Gammerman, A., and Vovk, V. (1998).
\newblock \enquote{Ridge regression learning algorithm in dual variables.}
\endbibitem

\bibitem[{Trefethen and Bau~III(1997)}]{trefethen1997numerical}
Trefethen, L.~N. and Bau~III, D. (1997).
\newblock {\em Numerical linear algebra\/}, volume~50.
\newblock Siam.
\endbibitem

\bibitem[{Turlach(2006)}]{turlach2006even}
Turlach, B.~A. (2006).
\newblock \enquote{An even faster algorithm for ridge regression of reduced
  rank data.}
\newblock {\em Computational statistics \& data analysis\/}, 50(3): 642--658.
\endbibitem

\bibitem[{van Koolwijk et~al.(2007)van Koolwijk, Despriet, van Duijn, Cortes,
  Vingerling, Aulchenko, Oostra, Klaver, and Lemij}]{van2007genetic}
van Koolwijk, L.~M., Despriet, D.~D., van Duijn, C.~M., Cortes, L. M.~P.,
  Vingerling, J.~R., Aulchenko, Y.~S., Oostra, B.~A., Klaver, C.~C., and Lemij,
  H.~G. (2007).
\newblock \enquote{Genetic contributions to glaucoma: heritability of
  intraocular pressure, retinal nerve fiber layer thickness, and optic disc
  morphology.}
\newblock {\em Investigative ophthalmology \& visual science\/}, 48(8):
  3669--3676.
\endbibitem

\bibitem[{Weinreb et~al.(2014)Weinreb, Aung, and Medeiros}]{glaucoma.14}
Weinreb, R., Aung, T., and Medeiros, F. (2014).
\newblock \enquote{The pathophysiology and treatment of glaucoma: A review.}
\newblock {\em JAMA\/}, 311(18): 1901--1911.
\newline\urlprefix\url{+ http://dx.doi.org/10.1001/jama.2014.3192}
\endbibitem

\bibitem[{Xu and Guan(2014)}]{xu.guan.14}
Xu, H. and Guan, Y. (2014).
\newblock \enquote{Detecting Local Haplotype Sharing and Haplotype
  Association.}
\newblock {\em Genetics\/}, 197(3): 823--838.
\endbibitem

\bibitem[{Yang et~al.(2010)Yang, Benyamin, McEvoy, Gordon, Henders, Nyholt,
  Madden, Heath, Martin, Montgomery et~al.}]{yang2010common}
Yang, J., Benyamin, B., McEvoy, B.~P., Gordon, S., Henders, A.~K., Nyholt,
  D.~R., Madden, P.~A., Heath, A.~C., Martin, N.~G., Montgomery, G.~W., et~al.
  (2010).
\newblock \enquote{Common SNPs explain a large proportion of the heritability
  for human height.}
\newblock {\em Nature genetics\/}, 42(7): 565--569.
\endbibitem

\bibitem[{Yang et~al.(2011)Yang, Lee, Goddard, and Visscher}]{yang2011gcta}
Yang, J., Lee, S.~H., Goddard, M.~E., and Visscher, P.~M. (2011).
\newblock \enquote{GCTA: a tool for genome-wide complex trait analysis.}
\newblock {\em The American Journal of Human Genetics\/}, 88(1): 76--82.
\endbibitem

\bibitem[{Yang and Matthias(2007)}]{yang2007optimal}
Yang, S. and Matthias, K.~G. (2007).
\newblock \enquote{The optimal relaxation parameter for the SOR method applied
  to a classical model problem.}
\newblock Technical report, Technical Report number TR2007-6, Department of
  Mathematics and Statistics, University of Maryland, Baltimore County.
\endbibitem

\bibitem[{Youla(1961)}]{youla}
Youla, D.~C. (1961).
\newblock \enquote{A normal form for a matrix under the unitary congruence
  group.}
\newblock {\em Canad. J. Math.\/}, 13: 694--704.
\endbibitem

\bibitem[{Young(1954)}]{young1954iterative}
Young, D. (1954).
\newblock \enquote{Iterative methods for solving partial difference equations
  of elliptic type.}
\newblock {\em Transactions of the American Mathematical Society\/}, 76(1):
  92--111.
\endbibitem

\bibitem[{Zhou and Guan(2017)}]{zhou.guan.jasa}
Zhou, Q. and Guan, Y. (2017).
\newblock \enquote{On the Null Distribution of Bayes Factors in Linear
  Regression.}
\newblock {\em Journal of American Statistical Association\/}, to appear.
\endbibitem

\bibitem[{Zhou et~al.(2013)Zhou, Carbonetto, and Stephens}]{zhou2013polygenic}
Zhou, X., Carbonetto, P., and Stephens, M. (2013).
\newblock \enquote{Polygenic modeling with Bayesian sparse linear mixed
  models.}
\newblock {\em PLoS Genet\/}, 9(2): e1003264.
\endbibitem

\end{thebibliography}

\begin{acknowledgement}
We thank the two anonymous reviewers for their helpful comments that improved the quality of our presentation.
\end{acknowledgement}

\newpage
\pagestyle{empty}
\newcommand{\specialcell}[2][c]{\begin{tabular}[#1]{@{}c@{}}#2\end{tabular}}
\renewcommand{\thesection}{S\arabic{section}}   
\renewcommand{\thetable}{S\arabic{table}}
\renewcommand{\thefigure}{S\arabic{figure}}
\renewcommand{\thesubsection}{S\arabic{section}.\arabic{subsection}}%
\setcounter{table}{0}
\setcounter{section}{0}
\setcounter{subsection}{0}
\setcounter{figure}{0}

\begin{center}
{\bf\Large Fast model-fitting of Bayesian variable selection regression using the iterative complex factorization algorithm (Supplementary)} \\
\vspace{0.3cm}
{\large Quan Zhou and Yongtao Guan} 
\end{center}

\section{Numerical comparison studies of the iterative methods using simulated datasets}\label{sec:supp.sim1}
In Section~\ref{sec:sim.icf}, we performed numerical comparison studies using real GWAS datasets, where each covariate follows a binomial distribution.  
For comparison, we conducted the same experiment using simulated datasets where $X_{ij}$ is continuous variable. 
Let $X_{(i)}$ be an arbitrary row of $\bX$.  
First, we sampled $X_{(i)}$ from a multivariate normal distribution with fixed pairwise correlation $r$; that is, $X_{(i)}$ is an independent sample from $\MVN(0, \bC_r )$ where $(\bC_r)_{ij} = 1$ if $i = j$ and $(\bC_r)_{ij} = r$ otherwise. 
Second, we sampled $X_{(i)}$ from a multivariate log-normal distribution such that 
$\log X_{(i)} \sim \MVN(0, \bC_r) $.  
For both cases, we tried $r = 0, 0.2, 0.5, 0.9$. 
Note that for the log-normal case, the pairwise correlation is then equal to $0, 0.13, 0.38, 0.85,$ respectively. 
Then we applied the same procedure and parameter values described in Section~\ref{sec:sim.icf} ($n = 3000$ and $\sigma_\beta = 0.5$) and collected the wall time usage of the iterative methods for solving $1,000$ independent linear systems with form~\eqref{eq:beta2}. 
The only difference is that we used the relaxation parameter $\omega_{\mathrm{SOR}} = 0.4$ for the successive over-relaxation method since it appeared to produce best overall performance. 
The results are summarized in Table~\ref{table:supp.time}. 
In almost every scenario, ICF exhibits an overwhelming advantage, especially when the data is heavy-tailed (the log-normal case). 
The only exception is the normal data with $r = 0.9$, where CG also works well. 
Compared with the results given in Section~\ref{sec:sim.icf}, the advantage of ICF over the Cholesky decomposition becomes more prominent.   
Lastly, we note that GS and SOR work poorly when the pairwise correlation $r \geq 0.2$, and as $r$ grows larger, they quickly become ineffective. This phenomenon will be discussed in the next section.

\begin{table}[htbp!]
\centering
\noindent
{\small
\begin{tabular}{cllllllllll } \toprule
\multirow{2}{*}{Dataset} & \multirow{2}{*}{$p$} & \multicolumn{5}{c}{Time (in seconds)} & \multicolumn{4}{c}{Convergence failures}  \\ 
\cmidrule(lr){3-7}
\cmidrule(lr){8-11}
&  &   {\footnotesize Chol} & {\footnotesize ICF} & {\footnotesize GS} & {\footnotesize SOR} & {\footnotesize CG} & {\footnotesize ICF}   & {\footnotesize GS}   & {\footnotesize SOR}  & {\footnotesize CG}  \\ 
\midrule
\multirow{5}{*}{\specialcell{Normal \\ $r = 0$}} & 50  & 0.031 & {\bf 0.017} & {\bf 0.017} & 0.063 & 0.023 & 0 & 0 & 0 & 0\\ 
 & 100  & 0.19 & {\bf 0.06} & 0.06 & 0.25 & 0.09 & 0 & 0 & 0 & 0\\ 
 & 200  & 1.35 & {\bf 0.25} & 0.29 & 1.08 & 0.38 & 0 & 0 & 0 & 0\\ 
 & 500  & 20.6 & {\bf 1.8} & 2.7 & 9.7 & 3.2 & 0 & 0 & 0 & 0\\ 
 & 1000  & 159 & {\bf 7.3} & 21 & 72 & 21 & 0 & 0 & 0 & 0\\ 
\midrule
\multirow{5}{*}{\specialcell{Normal \\ $r = 0.2$}} & 50  & 0.032 & {\bf 0.017} & 0.180 & 0.089 & 0.025 & 0 & 0 & 0 & 0\\ 
 & 100  & 0.19 & {\bf 0.06} & 1.82 & 0.61 & 0.10 & 0 & 525 & 0 & 0\\ 
 & 200  & 1.36 & {\bf 0.25} & 7.21 & 6.19 & 0.38 & 0 & 1000 & 31 & 0\\ 
 & 500  & 20.6 & {\bf 1.8} & 44.6 & 44.5 & 3.6 & 0 & 1000 & 1000 & 0\\ 
 & 1000  & 159 & {\bf 7.7} & 205 & 204 & 23 & 0 & 1000 & 1000 & 0\\ 
\midrule
\multirow{5}{*}{\specialcell{Normal \\ $r = 0.5$}} & 50  & 0.032 & {\bf 0.018} & 0.527 & 0.368 & 0.026 & 0 & 1000 & 0 & 0\\ 
 & 100  & 0.19 & {\bf 0.07} & 1.90 & 1.91 & 0.09 & 0 & 1000 & 1000 & 0\\ 
 & 200  & 1.35 & {\bf 0.26} & 7.21 & 7.22 & 0.46 & 0 & 1000 & 1000 & 0\\ 
 & 500  & 20.6 & {\bf 2.0} & 44.6 & 44.5 & 3.2 & 0 & 1000 & 1000 & 0\\ 
 & 1000  & 159 & {\bf 10} & 195 & 195 & 20 & 0 & 1000 & 1000 & 0\\ 
 \midrule
 \multirow{5}{*}{\specialcell{Normal \\ $r = 0.9$}} & 50  & 0.033 & {\bf 0.027} & 0.527 & 0.530 & {\bf 0.027} & 0 & 1000 & 1000 & 0\\ 
 & 100  & 0.19 & {\bf 0.10} & 1.90 & 1.91 & {\bf 0.10} & 0 & 1000 & 1000 & 0\\ 
 & 200  & 1.35 & 0.42 & 7.21 & 7.22 & {\bf 0.40} & 0 & 1000 & 1000 & 0\\ 
 & 500  & 20.7 & {\bf 3.0} & 44.6 & 44.5 & 3.5 & 0 & 1000 & 1000 & 0\\ 
 & 1000  & 159 & {\bf 12} & 190 & 189 & 18 & 0 & 1000 & 1000 & 0\\ 
 \midrule
 \multirow{5}{*}{\specialcell{Log-normal \\ $r = 0$}} & 50  & 0.032 & {\bf 0.011} & 0.015 & 0.058 & 0.026 & 0 & 0 & 0 & 0\\ 
 & 100  & 0.19 & {\bf 0.04} & 0.06 & 0.23 & 0.10 & 0 & 0 & 0 & 0\\ 
 & 200  & 1.35 & {\bf 0.17} & 0.28 & 1.00 & 0.43 & 0 & 0 & 0 & 0\\ 
 & 500  & 20.6 & {\bf 1.7} & 2.5 & 8.9 & 3.5 & 0 & 0 & 0 & 0\\ 
 & 1000  & 159 & {\bf 7.4} & 21 & 69 & 23 & 0 & 0 & 0 & 0\\ 
\midrule
\multirow{5}{*}{\specialcell{Log-normal \\ $r = 0.2$}} & 50  & 0.032 & {\bf 0.016} & 0.091 & 0.072 & 0.030 & 0 & 0 & 0 & 0\\ 
 & 100  & 0.19 & {\bf 0.06} & 0.84 & 0.34 & 0.12 & 0 & 0 & 0 & 0\\ 
 & 200  & 1.37 & {\bf 0.25} & 7.18 & 2.84 & 0.56 & 0 & 958 & 0 & 0\\ 
 & 500  & 20.7 & {\bf 1.8} & 44.5 & 44.4 & 4.9 & 0 & 1000 & 980 & 0\\ 
 & 1000  & 159 & {\bf 7.6} & 205 & 205 & 37 & 0 & 1000 & 1000 & 0\\ 
\midrule
\multirow{5}{*}{\specialcell{Log-normal \\ $r = 0.5$}} & 50  & 0.033 & {\bf 0.018} & 0.519 & 0.196 & 0.034 & 0 & 815 & 0 & 0\\ 
 & 100  & 0.19 & {\bf 0.06} & 1.90 & 1.80 & 0.17 & 0 & 1000 & 494 & 0\\ 
 & 200  & 1.36 & {\bf 0.25} & 7.21 & 7.22 & 0.76 & 0 & 1000 & 1000 & 0\\ 
 & 500  & 20.5 & {\bf 1.8} & 44.6 & 44.5 & 7.4 & 0 & 1000 & 1000 & 0\\ 
 & 1000  & 159 & {\bf 7.5} & 195 & 194 & 56 & 0 & 1000 & 1000 & 0\\ 
\midrule
\multirow{5}{*}{\specialcell{Log-normal \\ $r = 0.9$}} & 50  & 0.032 & {\bf 0.018} & 0.527 & 0.530 & 0.044 & 0 & 1000 & 1000 & 0\\ 
 & 100  & 0.19 & {\bf 0.07} & 1.90 & 1.91 & 0.19 & 0 & 1000 & 1000 & 0\\ 
 & 200  & 1.35 & {\bf 0.32} & 7.22 & 7.23 & 0.96 & 0 & 1000 & 1000 & 0\\ 
 & 500  & 20.5 & {\bf 2.4} & 44.6 & 44.5 & 9.3 & 0 & 1000 & 1000 & 0\\ 
 & 1000  & 159 & {\bf 13} & 203 & 202 & 74 & 0 & 1000 & 1000 & 0\\ 
\bottomrule
\end{tabular}
}
\caption{  
Wall time usage (in seconds) and numbers of convergence failures with simulated data. 
For normal data, $r$ is the correlation between any two different covariates; for log-normal data, it is the correlation between the logarithms of two different covariates. 
The statistics for each of the five methods were obtained from $1,000$ independent repeats. 
Chol: Cholesky decomposition; ICF: iterative  complex factorization; GS: Gauss-Seidel method; SOR: successive over-relaxation; CG: conjugate gradient.
``Convergence failures" columns give the numbers of experiments that fail to converge within $200$ iterations for the four iterative methods. 
} 
\label{table:supp.time}
\end{table}

\section{Numerical studies on the convergence rates of the iterative methods}
\subsection{Symmetric Toeplitz systems}
In this section, we assume the covariance matrix of $\bX$ is a symmetric Toeplitz matrix, which we will define shortly, denoted by $\bT$. 
Instead of sampling $\bX$, we simply let $\bX^t \bX = n \bT$ and consider the linear system 
$$\bA \hbb = (n \bT + \sigma_\beta^{-2} \bI )\hbb = \bz.$$
We will use several typical choices of $\bT$ to study how the convergence rates of the iterative methods change with $p$ and $\sigma_\beta$. 
As shown by Proposition~\ref{prop1}, the convergence rate of ICF is given by the spectral radius of $\bPsi(\omega)$. 
For SOR, using the notation introduced in~\eqref{def:a}, the convergence rate is the spectral radius of $(\bD + \omega \bL)^{-1} \left[ (1-\omega) \bD - \omega \bU \right]$, which we denote by $\rho_{\rm SOR}(\omega)$. 
By letting $\omega = 1$, we get the convergence rate of GS, which is $\rho_{\rm GS} = \rho( ( \bD +\bL)^{-1} \bU )$. 
Let $\lambda_{\max}$ and $\lambda_{min}$ be the largest and smallest eigenvalue of $\bA$ and thus the condition number of $\bA$ is $\kappa(\bA) = \lambda_{\max}/\lambda_{\min}$. 
Then the convergence rate of CG can be bounded from above by $\bar{\rho}_{\rm CG} = (\sqrt{\kappa (\bA)} - 1)/(\sqrt{\kappa (\bA)} + 1)$. Note that we cannot compute the exact convergence rate of CG, and in fact, CG can be regarded as a direct method since it always converges within $p$ iterations~\citep[Lec.~38]{trefethen1997numerical}.

We denote a symmetric Toeplitz matrix by $\bT(a_0, \dots, a_{p-1})$, which satisfies $T_{ij} = a_{|i - j|}$, i.e.
\begin{align*}
\bT(a_0, \dots, a_{p-1}) = 
\begin{bmatrix}
a_0 & a_1 &  a_2 &  \cdots & \cdots  & a_{p-1} \\
a_1 & a_0 &  a_1 &  \ddots  &    & a_{p-2} \\
a_2 & a_1 & \ddots & \ddots &  \ddots &  \vdots \\
\vdots & \ddots & \ddots & \ddots & a_1 &  a_2 \\
a_{p - 2} &   & \ddots  & a_1  & a_0 & a_1 \\
a_{p - 1} &  \cdots & \cdots &  a_2 & a_1 & a_0 
\end{bmatrix}. 
\end{align*}
We consider two structured choices of $\bT$.
\begin{enumerate}[(i)]
\item $\bT_c(r) = \bT(1, -r, r, -r, r, \dots)$. For a regression problem, this is equivalent to $\bT(1, r, r, \dots, r)$ up to sign flipping of the regression coefficients.  
\item $\bT_e(\ell) = \bT(a_0, \dots, a_{p-1})$ where $a_k = \exp( -k/\ell )$ and $\ell > 0$.  
This can be seen as the covariance matrix of a discretized Ornstein-Uhlenbeck process. 
\end{enumerate}
Note that in this setup, since 
$$ \bA  = n \bT + \sigma_\beta^{-2} \bI  = n \left(  \bT + \dfrac{1}{n \sigma_\beta^2} \bI \right), $$
the convergence rates of the iterative methods only depend on $\bT$ and the product $ n \sigma_\beta^2 $. 
Hence we can simply let $n$ be an arbitrary positive constant, and we used $n = 1000$.  
However, for a real scatter matrix $\bX^t \bX$, one should be careful when $n < p$ since $\bX^t\bX$ is then rank deficient.
Such ill-conditioned systems will be studied in Section~\ref{sec:ill}. 

We first fixed $\sigma_\beta = 0.5$ and studied how the convergence rates change with $p$ using five covariance matrices: $\bT_c(0.01),  \bT_c(0.1), \bT_c(0.5), \bT_e(5), \bT_e(50)$.  
The results are summarized in Table~\ref{table:supp.conv1}. 
As expected, as $p$ grows larger,  the convergence rate becomes slower for every method.  
For the covariance matrix $\bT_e$, since $T_{ij}$ decreases exponentially as $|i - j |$ increases, the impact of larger values of $p$ is not   significant. 
Strikingly, ICF  has a much faster convergence rate than all the other methods, even if we simply choose $\omega = 1$. 
In contrast, the convergence rates of GS and SOR quickly approach $1$ as $p$ increases. 
This is probably because GS, SOR (and also Jacobi iteration) favor diagonally dominant matrices for $\bA$. Hence, if for each row of $\bA$, the sum of off-diagonal entries increases linearly with $p$, these methods would easily run into convergence difficulties. 

Next, we fixed $p = 200$ and studied how the convergence rates change with $\sigma_\beta$ using $\bT_c(0.1)$ and $ \bT_e(5)$. 
The results are summarized in Table~\ref{table:supp.conv2}. 
As expected, a smaller value of $\sigma_\beta$   makes $\bA$ diagonally more dominant, and as a result, GS and SOR   converge faster. 
However, ICF exhibits an opposite trend: as $\sigma_\beta$ increases, ICF converges faster. This is probably due to the use of the Cholesky decomposition of $\bX^t \bX$, which implies that if $\sigma_\beta \uparrow \infty$, ICF would converge immediately. 
Note that, because it is the product $n \sigma_\beta^2 $ that really matters,  if $p$ and $\sigma_\beta$ are fixed and $n$ increases, ICF will converge faster while the other methods become slower. 

\begin{table}[htbp!]
\centering
\noindent
{\small
\begin{tabular}{c lllllllll } 
\toprule
\specialcell{Covariance \\ matrix}  &  $p$ & $\sqrt{ \kappa(\bA) }$ &  $\omega^\star_{\rm ICF}$  & $\omega^\star_{\rm SOR}$ &  $\rho(\bPsi(1))$  &  $\rho(\bPsi(\omega^\star))$ &  $\rho_{\rm GS}$ &  $\rho_{\rm SOR}$  &  $\bar{\rho}_{\rm CG}$ \\ 
\midrule
\multirow{5}{*}{ \specialcell{ $\bT_{c}(0.01)$ } }  & 50 & 1.23 & 0.9999 & 0.95 & 2.3e-04 & 1.2e-04 & 0.104 & 0.080 & 0.102 \\ 
 & 100 & 1.42 & 0.9997 & 0.90 & 6.0e-04 & 3.0e-04 & 0.214 & 0.171 & 0.172 \\ 
 & 200 & 1.74 & 0.9993 & 0.81 & 0.0013 & 6.5e-04 & 0.395 & 0.321 & 0.269 \\ 
 & 500 & 2.46 & 0.999 & 0.61 & 0.0027 & 0.0014 & 0.700 & 0.576 & 0.421 \\ 
 & 1000 & 3.33 & 0.998 & 0.42 & 0.0040 & 0.0020 & 0.875 & 0.745 & 0.538 \\ 
\midrule
\multirow{5}{*}{ \specialcell{$\bT_{c}(0.1)$} }  & 50 & 2.56 & 0.998 & 0.63 & 0.0032 & 0.0016 & 0.710 & 0.602 & 0.438 \\ 
 & 100 & 3.47 & 0.998 & 0.42 & 0.0046 & 0.0023 & 0.883 & 0.764 & 0.553 \\ 
 & 200 & 4.81 & 0.997 & 0.26 & 0.0060 & 0.0030 & 0.963 & 0.871 & 0.656 \\ 
 & 500 & 7.50 & 0.996 & 0.12 & 0.0078 & 0.0039 & 0.993 & 0.945 & 0.765 \\ 
 & 1000 & 10.6 & 0.996 & 0.06 & 0.0089 & 0.0044 & 0.998 & 0.972 & 0.827 \\ 
\midrule
\multirow{5}{*}{ \specialcell{$\bT_{c}(0.5)$} }  & 50 & 7.11 & 0.993 & 0.22 & 0.013 & 0.0066 & 0.985 & 0.939 & 0.753 \\ 
 & 100 & 10.0 & 0.992 & 0.12 & 0.016 & 0.0077 & 0.996 & 0.969 & 0.818 \\ 
 & 200 & 14.1 & 0.991 & 0.06 & 0.017 & 0.0087 & 0.9990 & 0.984 & 0.868 \\ 
 & 500 & 22.3 & 0.990 & 0.03 & 0.020 & 0.0097 & 0.9998 & 0.994 & 0.914 \\ 
 & 1000 & 31.5 & 0.990 & 0.01 & 0.021 & 0.010 & 0.99996 & 0.997 & 0.938 \\ 
\midrule
\multirow{5}{*}{ \specialcell{$\bT_{e}(5)$} }  & 50 & 9.51 & 0.984 & 1.06 & 0.032 & 0.016 & 0.747 & 0.719 & 0.810 \\ 
 & 100 & 9.74 & 0.984 & 1.07 & 0.032 & 0.016 & 0.772 & 0.748 & 0.814 \\ 
 & 200 & 9.81 & 0.984 & 1.08 & 0.032 & 0.016 & 0.792 & 0.774 & 0.815 \\ 
 & 500 & 9.84 & 0.984 & 1.05 & 0.032 & 0.016 & 0.804 & 0.797 & 0.815 \\ 
 & 1000 & 9.84 & 0.984 & 1.02 & 0.032 & 0.016 & 0.809 & 0.806 & 0.815 \\ 
\midrule
\multirow{5}{*}{ \specialcell{$\bT_{e}(50)$} }  & 50 & 51.4 & 0.838 & 1.00 & 0.387 & 0.162 & 0.965 & 0.965 & 0.962 \\ 
 & 100 & 64.1 & 0.837 & 1.06 & 0.388 & 0.163 & 0.970 & 0.970 & 0.969 \\ 
 & 200 & 74.4 & 0.837 & 1.00 & 0.389 & 0.163 & 0.973 & 0.973 & 0.973 \\ 
 & 500 & 81.7 & 0.837 & 0.95 & 0.389 & 0.163 & 0.976 & 0.975 & 0.976 \\ 
 & 1000 & 83.7 & 0.837 & 0.94 & 0.389 & 0.163 & 0.977 & 0.976 & 0.976 \\  
\bottomrule
\end{tabular}
}
\caption{  
Relationship between $p$ and the convergence rates of iterative methods  for solving Toeplitz systems $\bA \hbb = \bz$ where $\bA =  n  (  \bT +  (n  \sigma_\beta^{2})^{-1} \bI  )$. 
See the main text for the structure of $\bT$. For all the experiments, we used $n  \sigma_\beta^{2} = 250$. 
$\kappa(\bA)$ is the condition number of $\bA$. $\omega^\star_{\rm ICF}$ and $\omega^\star_{\rm SOR}$ refer to the optimal relaxation parameters for ICF and SOR respectively. 
$\rho$ means convergence rate, i.e. the spectral radius of the iteration matrix. An iterative method is convergent  if $\rho < 1$. 
In particular,  $\rho(\bPsi(1))$ refers to ICF with $\omega = 1$; $\rho(\bPsi(\omega^\star))$ refers to ICF with $\omega = \omega^\star_{\rm ICF}$; $\rho_{\rm SOR}$ refers to SOR with $\omega = \omega^\star_{\rm SOR}$; 
$\bar{\rho}_{\rm CG}$ is the upper bound of the convergence rate of CG. 
} 
\label{table:supp.conv1}
\end{table}

\begin{table}[htbp!]
\centering
\noindent
{\small
\begin{tabular}{c lllllllll } 
\toprule
\specialcell{Covariance \\ matrix}  &  $\sigma_\beta$ & $\sqrt{ \kappa(\bA) }$ &  $\omega^\star_{\rm ICF}$  & $\omega^\star_{\rm SOR}$ &  $\rho(\bPsi(1))$  &  $\rho(\bPsi(\omega^\star))$ &  $\rho_{\rm GS}$ &  $\rho_{\rm SOR}$  &  $\bar{\rho}_{\rm CG}$ \\ 
\midrule
\multirow{5}{*}{ \specialcell{$\bT_{c}(0.1)$} }  
 & 10 & 4.82 & 0.999992 & 0.25 & 1.5e-05 & 7.5e-06 & 0.963 & 0.871 & 0.656 \\ 
 & 1 & 4.82 & 0.9992 & 0.25 & 0.0015 & 7.5e-04 & 0.963 & 0.871 & 0.656 \\ 
 & 0.5 & 4.81 & 0.997 & 0.26 & 0.0060 & 0.0030 & 0.963 & 0.871 & 0.656 \\ 
 & 0.2 & 4.76 & 0.982 & 0.26 & 0.038 & 0.018 & 0.961 & 0.868 & 0.653 \\ 
 & 0.1 & 4.58 & 0.931 & 0.27 & 0.148 & 0.069 & 0.956 & 0.858 & 0.642 \\ 
 & 0.05 & 4.05 & 0.788 & 0.33 & 0.539 & 0.212 & 0.932 & 0.821 & 0.604 \\ 
\midrule
\multirow{5}{*}{ \specialcell{$\bT_{e}(5)$} }  
 & 10 & 10.0 & 0.99996 & 1.07 & 8.1e-05 & 4.1e-05 & 0.799 & 0.779 & 0.818 \\ 
 & 1 & 9.96 & 0.996 & 1.08 & 0.0081 & 0.0040 & 0.798 & 0.777 & 0.817 \\ 
 & 0.5 & 9.81 & 0.984 & 1.08 & 0.032 & 0.016 & 0.793 & 0.774 & 0.815 \\ 
 & 0.2 & 8.96 & 0.912 & 1.05 & 0.193 & 0.088 & 0.761 & 0.755 & 0.799 \\ 
 & 0.1 & 7.10 & 0.749 & 0.99 & 0.672 & 0.251 & 0.698 & 0.697 & 0.753 \\ 
 & 0.05 & 4.56 & 0.547 & 0.87 & 1.66 & 0.453 & 0.600 & 0.556 & 0.640 \\ 
\bottomrule
\end{tabular}
}
\caption{  
Relationship between $\sigma_\beta$ and the convergence rates of iterative methods  for solving Toeplitz systems $\bA \hbb = \bz$ where $\bA =  n  (  \bT +  (n  \sigma_\beta^{2})^{-1} \bI  )$. 
 For all the experiments, we used $n   = 1000$ and $p = 200$. 
See Table~\ref{table:supp.conv1} for the annotations of column headers. 
} 
\label{table:supp.conv2}
\end{table}

\subsection{Ill-conditioned systems}\label{sec:ill}
Our last numerical study concerned ill-conditioned linear systems.  Let $\bC$ be the covariance matrix of $\bX$ and consider the linear system $\bA \hbb = \bz$ where 
$$ \bA  = n \bC + \sigma_\beta^{-2} \bI = n \left(  \bC + \dfrac{1}{n \sigma_\beta^2} \bI \right).  $$
The matrix $\bA$ is always full rank due to the shrinkage/regularization term $\sigma_\beta^{-2} \bI$; however, it can be ill-conditioned if $\bC$ is ill-conditioned and $\sigma_\beta$ is large.  

We still fixed $n = 1000, p = 200$ and tried different values for $\sigma_\beta$. 
For $\bC$,  we considered the following three choices so that it is ill-conditioned: Toeplitz matrices $\bT_c(0.95)$ and $\bT_e(500)$ (see the last section for definition) and the matrix $\tilde{\bT}_{c}(20, 0.9)$ defined by 
\begin{align*}
[ \tilde{\bT}_{c}(m, r) ]_{ij} = \left\{ \begin{array}{cc}
1,  &  \quad  \text{ if } i = j , \\
1 , &  \quad \text{ if } i = 2k -1, j = 2k, \, k = 1, \dots, m, \\
1 , &  \quad \text{ if } i = 2k, j = 2k - 1, \, k = 1, \dots, m, \\
r,  & \quad   \text{otherwise} . 
\end{array} \right. 
\end{align*}
That is, $\tilde{\bT}_{c}(20, 0.9)$ represents the scenario where there are $20$ duplicate pairs of covariates ($1$ and $2$, $3$ and $4$, $\dots$, $39$ and $40$), and all the remaining pairs have correlation equal to $0.9$. 
Hence, the matrix $\tilde{\bT}_{c}(20, 0.9)$ has rank equal to $p - 20$. 
The results are summarized in Table~\ref{table:supp.conv3}. 
ICF is still much better than all the other methods except when the covariance matrix is $\bT_e(500)$ and $\sigma_\beta \leq 0.2$. 
For a given covariance matrix, ICF favors larger values of $\sigma_\beta$ while the other methods favor small values of $\sigma_\beta$. 
Hence, we can also conclude that for an ill-conditioned covariance matrix and a fixed $\sigma_\beta$, ICF converges faster for larger sample sizes, which is a very appealing property since the other methods usually fail in such cases. 
Lastly, for the rank-deficient covariance matrix $\tilde{\bT}_{c}(20, 0.9)$, the optimal relaxation parameter for ICF appears to be around $2/3$ and the corresponding convergence rate is about $1/3$. Interestingly, we tried $\tilde{\bT}_{c}(m, r)$ with other values for $m$ and $r$, and made the same observation.

\begin{table}[htbp!]
\centering
\noindent
{\small
\begin{tabular}{c lllllllll } 
\toprule
\specialcell{Covariance \\ matrix}  &  $\sigma_\beta$ & $\sqrt{ \kappa(\bA) }$ &  $\omega^\star_{\rm ICF}$  & $\omega^\star_{\rm SOR}$ &  $\rho(\bPsi(1))$  &  $\rho(\bPsi(\omega^\star))$ &  $\rho_{\rm GS}$ &  $\rho_{\rm SOR}$  &  $\bar{\rho}_{\rm CG}$ \\ 
\midrule
\multirow{5}{*}{ \specialcell{ $\bT_c(0.95)$} }   
 & 10 & 61.6 & 0.9997 & 0.03 & 5.2e-04 & 2.6e-04 & 0.99997 & 0.9992 & 0.968 \\ 
 & 1 & 61.0 & 0.974 & 0.03 & 0.053 & 0.026 & 0.99997 & 0.9992 & 0.968 \\ 
 & 0.5 & 59.3 & 0.900 & 0.03 & 0.221 & 0.100 & 0.99997 & 0.9991 & 0.967 \\ 
 & 0.2 & 50.3 & 0.518 & 0.03 & 1.86 & 0.482 & 0.99996 & 0.999 & 0.961 \\ 
 & 0.1 & 35.6 & 0.132 & 0.04 & 13.2 & 0.868 & 0.99991 & 0.998 & 0.945 \\ 
\midrule
\multirow{5}{*}{ \specialcell{ $\bT_e(500)$} }   
 & 10 & 417.4 & 0.995 & 1.05 & 0.0100 & 0.0050 & 0.998 & 0.998 & 0.995 \\ 
 & 1 & 296.6 & 0.668 & 0.92 & 0.995 & 0.332 & 0.997 & 0.997 & 0.993 \\ 
 & 0.5 & 187.6 & 0.336 & 0.71 & 3.95 & 0.664 & 0.996 & 0.995 & 0.989 \\ 
 & 0.2 & 82.3 & 0.080 & 0.42 & 23.0 & 0.920 & 0.996 & 0.987 & 0.976 \\ 
 & 0.1 & 41.7 & 0.030 & 0.27 & 65.4 & 0.970 & 0.996 & 0.972 & 0.953 \\ 
 \midrule
\multirow{5}{*}{ \specialcell{$\tilde{\bT}_{c}(20, 0.9)$} } & 10 & 4244.1 & 0.666 & 1.36 & 1.00 & 0.334 & 0.99998 & 0.99996 & 0.9995 \\ 
 & 1 & 424.4 & 0.664 & 0.32 & 1.01 & 0.336 & 0.99993 & 0.9996 & 0.995 \\ 
 & 0.5 & 212.2 & 0.660 & 0.17 & 1.03 & 0.340 & 0.99993 & 0.9993 & 0.991 \\ 
 & 0.2 & 84.9 & 0.651 & 0.08 & 1.07 & 0.349 & 0.99991 & 0.998 & 0.977 \\ 
 & 0.1 & 42.5 & 0.635 & 0.04 & 1.15 & 0.365 & 0.9999 & 0.996 & 0.954 \\ 
\bottomrule
\end{tabular}
}
\caption{  
Relationship between $\sigma_\beta$ and the convergence rates of iterative methods  for solving ill-conditioned systems.  
 For all the experiments, we used $n   = 1000$ and $p = 200$. 
See Table~\ref{table:supp.conv1} for the annotations of column headers. 
} 
\label{table:supp.conv3}
\end{table}

\begin{table}[htbp!]
\centering
\noindent
{\small
\begin{tabular}{llllllll } \toprule
\multirow{2}{*}{$n$} & \multirow{2}{*}{$p$} & \multicolumn{3}{c}{IND} & \multicolumn{3}{c}{DEP}  \\ 
\cmidrule(lr){3-5}
\cmidrule(lr){6-8}
&  &  mean($\eta_{min}^2$)  &  max($\eta_{min}^2$)  &  mean($\eta_{max}^2$) &  mean($\eta_{min}^2$)  &  max($\eta_{min}^2$) &  mean($\eta_{max}^2$)  \\ 
\midrule
20 & 10 & 0.0020 & 0.023 & 0.34 & 0.0020 & 0.026 & 0.35 \\ 
20 & 20 & 0.00099 & 0.016 & 0.77 & 0.0010 & 0.021 & 0.77 \\ 
\midrule
50 & 10 & 0.00057 & 0.0087 & 0.11 & 0.00066 & 0.0087 & 0.11 \\ 
50 & 20 & 0.00029 & 0.0063 & 0.3 & 0.00035 & 0.0042 & 0.31 \\ 
50 & 50 & 0.00016 & 0.0021 & 0.91 & 0.00016 & 0.0025 & 0.91 \\ 
\midrule
100 & 10 & 0.00017 & 0.0022 & 0.035 & 0.00017 & 0.0020 & 0.035 \\ 
100 & 20 & 9.3e-05 & 0.0014 & 0.097 & 9.5e-05 & 0.0014 & 0.1 \\ 
100 & 50 & 3.8e-05 & 0.00046 & 0.39 & 4.2e-05 & 0.0005 & 0.4 \\ 
\midrule
200 & 10 & 5.2e-05 & 0.00096 & 0.0099 & 5.5e-05 & 0.00059 & 0.011 \\ 
200 & 20 & 2.3e-05 & 0.00029 & 0.028 & 2.8e-05 & 0.00035 & 0.031 \\ 
200 & 50 & 1.1e-05 & 0.00016 & 0.11 & 1.1e-05 & 0.00018 & 0.12 \\ 
\bottomrule
\end{tabular}
}
\caption{  
The distribution of $\eta_{min}^2$ in the IND and DEP datasets. For each pair $(n, p)$, we sampled $1,000$ data matrices and computed the mean and the maximum of the associated values of $\eta_{min}^2$ with $\sigma_\beta = 0.5$.  
The mean of $\eta_{max}^2$ is listed for comparison. 
} 
\label{table:supp.eta}
\end{table}

\section{Numerical evidence for $\eta^2_{\min} \approx 0$  }
Recall that in Section~\ref{sec:icf.conv}, when we introduced our adaptive strategy for choosing $\omega$ for ICF, we assumed that $\eta_{min}$ is zero. 
Using~\eqref{eq:rad.psi} and~\eqref{eq:optimal.w}, one can show that by omitting sufficiently small $\eta_{min}^2$,
the induced error on $\rho(\bPsi(\omega^\star)) $ is $c \eta_{min}^2$ for some $c \in (0, 1)$. 
Hence, it is fine to neglect $\eta_{min}^2$ as long as it is less than, say $0.01$. 
We only need to verify this assumption for even $p$, since if $p$ is odd, $\eta_{min}$ is always zero. 
It turned out that this assumption holds very generally, once we have a moderately large sample size and $p$ is not too small. 

We used the two GWAS datasets described in the main text, IND and DEP, to examine how fast $\eta_{min}^2$ decreases to zero. 
As will be shown later in the results, $\eta_{\min}^2$ decreases as either $n$ or $p$ increases; therefore, we only considered small values for $n$ and $p$ in this study. 
For each pair $(n, p)$, we randomly sampled $1,000$ data matrices $\bX$ from each dataset, and computed the mean and the maximum of the associated values of $\eta_{min}^2$. 
We still used $\sigma_\beta = 0.5$.
The results shown in Table~\ref{table:supp.eta} indicate that it is very safe to assume $\eta_{min}^2 \approx 0$ when the sample size is large, say greater than $100$.
We did the same experiments with our simulated datasets used in Section~\ref{sec:supp.sim1} and made very similar observations. 
We did observe that $\eta_{min}^2$ tends to be larger when the data has strong collinearity or a heavy-tailed distribution (i.e. the log-normal data). 
But as long as $n \geq 100$ and $p \geq 10$, it is safe to assume that $\eta_{min}^2$  is negligible (less than $0.01$). 
Table~\ref{table:supp.eta} also includes the mean of $\eta_{max}^2$. By direct calculations, one can verify that the induced relative error on $\rho(\bPsi(\omega^\star)) $ is approximately $\eta_{min}^2/\eta_{max}^2$, which is still very small in every case we considered.

\section{Summary of the MCMC algorithm of fastBVSR}
\begin{algorithm}[H]
\caption{fastBVSR}\label{alg:fbvsr}
\begin{algorithmic}
\STATE Initialize $\bgamma^{(0)}$ and compute $\bm{Q}(\bgamma^{(0)} ) = \bX_{\bgamma^{(0)}}^t \bX_{\bgamma^{(0)}}$ and its Cholesky decomposition.  
\vspace{0.1cm}
\FOR{$i=1$ \TO $N_{mcmc}$}
\vspace{0.1cm}
  \STATE Propose $h', \bgamma'$ using $h^{(i)}, \bgamma^{(i)}$. 
\vspace{0.1cm} 
  \STATE Update the Cholesky decomposition of $\bm{Q}({\bgamma'})$ from that of $\bm{Q}(\bgamma^{(i)} ) $ (see examples below).  Compound the update multiple times if necessary.
\vspace{0.1cm}
  \STATE Compute $\sigma_\beta^2 = \sigma_\beta^2(\bgamma', h')$ and draw $\tilde{\bbeta}_{\bgamma'}$ from $\MVN(\bm{0},  \sigma^2_\beta  \bm{I} )$.
\vspace{0.1cm}
  \STATE Draw $\tilde{\bepsilon}$ from $\MVN(\bzero, \bI)$  and set $\tilde{\by} = \bX_{\bgamma'} \tilde{\bbeta}_{\bgamma'} + \tilde{\bepsilon}$.
\vspace{0.1cm} 
  \STATE Calculate the acceptance ratio $\alpha$ according to \eqref{eq:alpha2} in the main text. 
\vspace{0.1cm} 
  \STATE   Set $h^{(i+1)} = h'$ and $\bgamma^{(i+1)} = \bgamma'$  with probability $\min\{1, \alpha\}$ and stay  otherwise. 
\vspace{0.1cm}
  \IF {$(i+1) \mod 1000 = 0$}
 \vspace{0.1cm}
  \STATE Sample $\pi^{(i+1)}$ and $\tau^{(i+1)}$ and perform Rao-Blackwellization. 
\vspace{0.1cm}
  \ENDIF
\vspace{0.1cm}
\ENDFOR 
\end{algorithmic}
\end{algorithm}

\section{Examples for updating the Cholesky decomposition}
Suppose we start with 
\[
\bX^t \bX = 
\begin{pmatrix} 
9 &  3 & 6 \\
3  & 5 & 4 \\
6  & 4 & 21 \\
\end{pmatrix}, \text{ and }  
\bR=
\begin{pmatrix}
3  &  1 & 2  \\
0  &  2 & 1 \\
0  &  0 & 4 \\
\end{pmatrix}
\]
such that $\bX^t \bX = \bR^t \bR.$  
 
\paragraph*{Add a covariate}  
To add one covariate, we attach it to the last column of $\bX$ and denote the new matrix by $\bX'$. 
Suppose
\begin{equation*}
(\bX')^t \bX'  = \left[ \begin{array}{cccc}
9 &  3 & 6 & 3 \\
3  & 5 & 4 & 7 \\
6  & 4 & 21 & 9 \\
3  & 7 & 9  & 20 \\
\end{array}
\right].
\end{equation*}
To compute the new Cholesky decomposition $\bR'$, we solve 
\begin{equation*}
\left[\begin{array}{cccc}
3  &  0 & 0  & 0 \\
1  &  2 & 0  & 0 \\
2  &  1 & 4  & 0 \\
r_{41} & r_{42} & r_{43} & r_{44} \\
\end{array}
\right] \left[\begin{array}{c}
r_{41} \\
r_{42} \\
r_{43} \\
r_{44} \\
\end{array}
\right]  = \left[\begin{array}{c}
3  \\
7 \\
9 \\
20 
\end{array}
\right],
\end{equation*}
which requires only one forward substitution to get 
\begin{equation*}
\bR' = \left[\begin{array}{cccc}
3  &  1 & 2 & 1 \\
0  &  2 & 1 & 3 \\
0  &  0 & 4 & 1 \\
0  & 0  & 0 &  3 
\end{array}
\right]. 
\end{equation*}

\paragraph*{Remove a covariate} 
Consider removing the second covariate of $\bX$ and 
denote the new matrix by $\bX'$ to get
\[
(\bX')^t \bX' = 
\begin{pmatrix}
9 &  6 \\
6 & 21 \\
\end{pmatrix}, 
\text{ and } \tilde{\bR} \define
\begin{pmatrix}
3  &   2  \\
0  &   1 \\
0  &   4 \\
\end{pmatrix}
\]
is obtained by removing the second column from $\bR$. 
Note that $\tilde{\bR}^t \tilde{\bR} = (\bX')^t \bX'$. 
To make $\tilde{r}_{32}$ zero, we introduce the Givens rotation matrix
\begin{equation*}
\bG = \left[ \begin{array}{ccc}
1  &  0  &  0 \\
0  &  1/\sqrt{17}  & 4/\sqrt{17}  \\
0  &  -4/\sqrt{17}  & 1/\sqrt{17}
\end{array}
\right]. 
\end{equation*}
Note $\bG^t \bG = \bI$, and  
 $\tilde{\bR}^t \bG^t \bG \tilde{\bR}  =  \tilde{\bR}^t \tilde{\bR} = (\bX')^t \bX'$.
The new Cholesky decomposition is then given by 
\begin{equation*}
\bR' =  \bG \tilde{\bR} = \left[ \begin{array}{cc}
3 &  2 \\
0 & 4.123 \\
0  &  0 \\
\end{array}
\right]
\end{equation*} 
and the bottom row of zeros can be removed without affecting subsequent calculations.

\section{Proof for the exchange algorithm}
We prove that the posterior $P(\bgamma, h  \mid \by)$ is the stationary distribution of the exchange algorithm by checking the detailed balance condition, i.e. 
\begin{align*}
\dfrac{ K(\btheta' \mid \btheta ) }{ K (\btheta \mid \btheta') } 
\dfrac{ \int \min\left\{  \alpha(\btheta, \btheta', \tilde{\by})  ,  1 \right\}   
L(\tilde{\by} ,  \btheta' ) Z(\btheta' )
d \tilde{\by} }{\int \min\left\{  \alpha(\btheta', \btheta, \tilde{\by})  ,  1\right\}   
L(\tilde{\by} ,  \btheta ) Z(\btheta )
d \tilde{\by} }  = \dfrac{ Z(\btheta') L (\by , \btheta ') P (\btheta') }{Z(\btheta) L (\by , \btheta ) P (\btheta) }
\end{align*}  
where $\btheta = (\bgamma, h)$ and 
\begin{align*}
\alpha(\btheta, \btheta', \tilde{\by} ) = \dfrac{K(\btheta \mid \btheta')}{K(\btheta' \mid \btheta)}
\dfrac{ L(\tilde{\by} \mid \btheta) }{L(\tilde{\by} \mid \btheta')  }  
\dfrac{L(\by \mid \btheta')}{ L (\by \mid \btheta)} \dfrac{P(\btheta')}{P(\btheta)}. 
\end{align*}
Let $\mathcal{Y} = \{ \tilde{y}:   \alpha(\btheta, \btheta', \tilde{\by} ) > 1 \}$. Then, 
\begin{align*}
&  K(\btheta' \mid \btheta ) \int \min\left\{  \alpha(\btheta, \btheta', \tilde{\by})  ,  1 \right\}   
L(\tilde{\by} ,  \btheta' ) Z(\btheta' )
d \tilde{\by}  \\
= \; &  \dfrac{ Z(\btheta') }{ L(\by, \btheta) P(\btheta) } 
\left [ L(\by, \btheta) P(\btheta) K(  \btheta' \mid \btheta ) \int_{\mathcal{Y} } 
 L ( \tilde{\by} , \btheta' ) d\tilde{\by}
 +  L(\by, \btheta') P(\btheta') K(  \btheta \mid \btheta' ) \int_{\mathcal{Y}^c } 
 L ( \tilde{\by} , \btheta ) d\tilde{\by}
\right]  ,\\
&  K(\btheta \mid \btheta' ) \int \min\left\{  \alpha(\btheta', \btheta, \tilde{\by})  ,  1 \right\}   
L(\tilde{\by} ,  \btheta ) Z(\btheta)
d \tilde{\by}  \\
= \; &  \dfrac{ Z(\btheta) }{ L(\by, \btheta') P(\btheta') } 
\left [ L(\by, \btheta) P(\btheta) K(  \btheta' \mid \btheta ) \int_{\mathcal{Y} } 
 L ( \tilde{\by} , \btheta' ) d\tilde{\by}
 +  L(\by, \btheta') P(\btheta') K(  \btheta \mid \btheta' ) \int_{\mathcal{Y}^c } 
 L ( \tilde{\by} , \btheta ) d\tilde{\by}
\right]  . 
\end{align*}
Thus the detailed balance condition holds and the exchange algorithm leaves the posterior invariant. 
For a more general proof, see~\cite{andrieu2009pseudo}.


\end{document}